\documentclass[11pt]{article}
\usepackage{url}
\usepackage{fullpage}
\usepackage{./jeremy}
\usepackage{./preorderMeasureNotations}
\usepackage{./succinctEncodingsNotations}
\usepackage{amssymb}

\newtheorem{theorem}{Theorem}
\newtheorem{definition}{Definition}
\newtheorem{corollary}[theorem]{Corollary}

\newenvironment{proof}{\noindent {\bf\em Proof.}}{\hfill $\Box$ \\ }

\title{On Compressing Permutations and Adaptive Sorting
\thanks{Partially funded by Fondecyt Grant 1-110066, Chile.
An early partial version of this paper appeared in {\em STACS} 
\cite{BN09}.}}

\author{%
\begin{tabular}{cc}
J\'er\'emy Barbay & Gonzalo Navarro \\ 
\end{tabular}
\ \\
\ \\
\small Dept. of Computer Science, University of Chile}

\date{}

\begin{document}

\maketitle

\begin{abstract}
  Previous compact representations of permutations have focused on
  adding a small index on top of the plain data $\langle \pi(1), \pi(2),
  \ldots \pi(n) \rangle$, in order to efficiently support the
  application of the inverse or the iterated permutation.
  In this paper we initiate the study of techniques that exploit the
  compressibility of the data itself, while retaining efficient
  computation of $\pi(i)$ and its inverse.
  In particular, we focus on exploiting {\em runs}, which are subsets
  (contiguous or not) of the domain where the permutation is monotonic.
  Several variants of those types of runs arise in real applications
  such as inverted indexes and suffix arrays.
  Furthermore, our improved results on compressed data structures for
  permutations also yield better adaptive sorting algorithms.
\end{abstract}

\section{Introduction}
\label{sec:introduction}

Permutations of the integers $[1..n] = \{1,\ldots,n\}$ are not only a fundamental
mathematical structure, but also a
basic building block for the succinct encoding of integer
functions~\cite{representingFunctions},
strings~\cite{Kar99,rankSelectOperationsOnLargeAlphabets,GV06,ANS06,MN07,CHSV08},
binary
relations~\cite{succinctIndexesForStringsBinaryRelationsAndMultiLabeledTrees},
and geometric grids~\cite{BLNS09}, among others.
A permutation $\pi$ can be trivially encoded in $n\lceil\lg n\rceil$
bits, which is within $\bigo(n)$ bits of the information theory lower
bound of $\lg(n!)$ bits, where $\lg x=\log_2 x$ denotes the logarithm
in base two.

In most of those applications, efficient
computation is required for both the value $\pi(i)$ at any point
$i\in[1..n]$ of the permutation, and for the position $\pi^{-1}(j)$ of
any value $j\in[1..n]$ (i.e., the value of the inverse permutation).
The only alternative we are aware of to storing explicitly both $\pi$ and
$\pi^{-1}$ is by Munro et al.~\cite{succinctRepresentationsOfPermutations},
who add a small structure over the plain representation of $\pi$ so that,
by spending $\epsilon \lg n$ extra bits, any $\pi^{-1}(j)$ can be computed
in time $\bigo(1/\epsilon)$. This is extended to any positive or negative
power of $\pi$, $\pi^k(i)$. They give another solution using $\bigo(n)$
extra bits and computing any $\pi^k(j)$ in time $\bigo(\lg n / \lg\lg n)$.

The lower bound of $\lg(n!)$ bits yields a lower bound of $\Omega(n\lg
n)$ comparisons to sort such a permutation in the comparison model, in
the worst case over all permutations of $n$ elements.
Yet, a large body of research has been dedicated to finding better
sorting algorithms which can take advantage of specificities of each
permutation to sort.
Some examples are permutations composed of a few sorted
blocks~\cite{measuresOfPresortednessAndOptimalSortingAlgorithms}
(e.g., $(\mathit{1},\mathit{3},\mathit{5},\mathit{7},\mathit{9},\mathbf{2},\mathbf{4},\mathbf{6},\mathbf{8},\mathbf{10})$
or
$(\mathit{6},\mathit{7},\mathit{8},\mathit{9},\mathit{10},\mathbf{1},\mathbf{2},\mathbf{3},\mathbf{4},\mathbf{5})$),
or permutations containing few sorted
subsequences~\cite{sortingShuffledMonotoneSequences}
(e.g., $(\mathit{1},\mathbf{6},\mathit{2},\mathbf{7},\mathit{3},\mathbf{8},\mathit{4},\mathbf{9},\mathit{5},\mathbf{10})$).
Algorithms performing possibly $o(n\lg n)$ comparisons on such
permutations, yet still $\bigo(n\lg n)$ comparisons in the worst case,
are achievable and preferable if those permutations arise with
sufficient frequency.
Other examples are classes of permutations whose structure makes them
interesting for applications: see the seminal paper of 
Mannila~\cite{measuresOfPresortednessAndOptimalSortingAlgorithms}, and
the survey of Moffat and Petersson~\cite{anOverviewOfAdaptiveSorting} for
more details.

Each sorting algorithm in the comparison model yields an encoding
scheme for permutations: the result of all comparisons performed
uniquely identifies the permutation sorted, and hence encodes it.
Since an adaptive sorting algorithm performs $o(n\lg n)$ comparisons
on a class of ``easy'' permutations, each adaptive algorithm yields a
{\em compression scheme} for permutations, at the cost of losing a
constant factor on the complementary class of ``hard'' permutations.
Yet such compression schemes do not necessarily support efficiently
the computation of value $\pi^{-1}(j)$ of the inverse permutation for
an arbitrary value $j\in[1..n]$, or even the simple application of the
permutation, $\pi(i)$. 

This is the topic of our study: the interplay between adaptive sorting
algorithms and compressed representation of permutations that support
efficient application of $\pi(i)$ and $\pi^{-1}(j)$. In particular we
focus on classes of permutations that can be decomposed into a small
number of {\em runs}, that is, monotone subsequences of $\pi$, either
contiguous or not. 

Our results include compressed representations of permutations whose space
and time to compute any $\pi(i)$ and $\pi^{-1}(j)$ are proportional to the
{\em entropy} of the distribution of the sizes of the runs. As far as we 
know, this is the first compressed representation of permutations with similar
capabilities.

We also develop the corresponding sorting algorithms, which in general refine 
the known complexities to sort those classes of permutations: While there
exist sorting algorithms taking advantage of the number of runs of various
kinds, ours take advantage of their size distribution and are strictly
better (or equal, at worst). 

Finally, we
obtain a representation for strings that improves upon the state of the
art \cite{FMMN07,GRR08} in the average case, while retaining
their space and worst-case performance for operations access, rank, and
select.

At the end of the article we describe some applications where the class
of permutations compressible with the techniques we develop here
naturally arise, and conclude with a more general perspective 
on the meaning of those results and the research directions they suggest.

\section{Basic Concepts and Previous Work}
\label{sec:previous-work}

\subsection{Entropy}
\label{sec:compr-subs}

We define the {\em entropy} of a distribution \cite{CT91}, a measure that 
will be useful to evaluate compressibility results. 

\begin{definition}
  The {\em entropy} of a sequence of positive integers $X=\langle n_1,n_2,
  \ldots, n_r\rangle$ adding up to $n$ is $\entropy(X) =
  \sum_{i=1}^r\frac{n_i}{n} \lg \frac{n}{n_i}$.
  By concavity of the logarithm, it holds that
    $(r-1)\lg n \le n\entropy(X) \le n\lg r$
   and that $\entropy(\langle n_1,n_2,\ldots n_r\rangle) >
	     \entropy(\langle n_1{+}n_2,\ldots,n_r\rangle)$.
\label{def:entrop}
\end{definition}

Here $\langle n_1,n_2, \ldots, n_r\rangle$ is a distribution of values
adding up to $n$ and $\entropy(X)$ measures how even is the distribution.
$\entropy(X)$ is maximal ($\lg r$) when all $n_i = n/r$ and minimal 
($\frac{r-1}{n}\lg n + \frac{n-r+1}{n}\lg\frac{n}{n-r+1}$) when they are
most skewed ($X=\langle 1,1,\ldots,1,n-r+1\rangle$). 

This measure is related to entropy of random variables and of sequences as
follows.
If a random variable $P$ takes the value $i$ with probability $n_i/n$, for
$1\le i\le r$, then its entropy is $\entropy(\langle n_1,n_2, \ldots, 
n_r\rangle)$. Similarly, if a string $S[1..n]$ contains $n_i$ occurrences of 
character $c_i$, then its empirical zero-order entropy is $\entropy_0(S) =
\entropy(\langle n_1,n_2, \ldots, n_r\rangle)$.

$\entropy(X)$ is then a lower bound to the average number of bits needed to
encode an instance of $P$, or to encode a character of $S$ (if we model $S$
statistically with a zero-order model, that is, ignoring the context of
characters). 

\subsection{Huffman Coding}

The Huffman algorithm~\cite{Huf52} receives frequencies
$X = \langle n_1, n_2, \ldots, n_r \rangle$ adding up to $n$,
and outputs in $\bigo(r \lg r)$ time a prefix-free code for the
symbols $[1..r]$. If $\ell_i$ is the bit length of the code assigned
to the $i$th symbol, then $L=\sum \ell_i n_i$ is minimal.
Moreover, $L < n(1+\entropy(X))$. For example, given
$S[1..n]$ over alphabet $[1..r]$, with symbol frequencies $X$,
one can compress $S$ by concatenating the codewords of the successive
symbols $S[i]$, achieving total length $L < n(1+\entropy_0(S))$. (One
also has to encode the usually negligible codebook of $\bigo(r \lg r)$ bits.)

Huffman's algorithm starts with a forest of $r$ leaves
corresponding to the frequencies $\{ n_1,n_2, \ldots,n_r\}$,
and outputs a binary trie with those leaves, in some order.
This so-called {\em Huffman tree} describes the optimal encoding as
follows:
The sequence of left/right choices (interpreted as 0/1)
in the path from the root to each
leaf is the prefix-free encoding of that leaf, of length $\ell_i$
equal to the leaf depth.

A generalization of this encoding is multiary Huffman coding \cite{Huf52},
in which the tree is given arity $t$, and then the Huffman codewords are 
sequences over an alphabet $[1..t]$. In this case the algorithm also produces 
the optimal code, of length $L < n(1+\entropy(X)/\lg t)$. 

\subsection{Succinct Data Structures for Sequences}
\label{sec:sequences}

Let $S[1..n]$ be a sequence of symbols from the alphabet $[1..r]$.
This includes bitmaps when $r=2$ (where, for convenience, the alphabet
will be $\{0,1\}$ rather than $\{1,2\}$). 
We will make use of succinct representations of $S$ that support the
rank and select operators over strings 
and over binary vectors: 
$\StrRank_c(S,i)$ gives the number of occurrences of $c$ in $S[1..i]$
and $\StrSelect_c(S,j)$ gives the position in $S$ of the $j$th
occurrence of $c$.

When $r=2$, $S$ requires $n$ bits and
$\BinRank$ and $\BinSelect$ can be supported in constant time using
$\bigo(n\llg n/\lg n) = o(n)$ bits on top of $S$ \cite{Mun96,Gol06}. 

Raman et al.~\cite{RRR02} devised a bitmap representation that takes
$n\entropy_0(S) + o(n)$ bits, while maintaining the constant time for
supporting the operators. For the binary case $\entropy_0(S)$ is just
$m\lg\frac{n}{m} + (n-m)\lg\frac{n}{n-m} = m\lg\frac{n}{m}+\bigo(m)$,
where $m$ is the number of bits set to $1$ in $S$. 
Golynski et al.~\cite{GGGRR07} reduced the $o(n)$-bits redundancy 
in space to $\bigo(n\lg\lg n/\lg^2 n)$.

When $m$ is much smaller than $n$, the $o(n)$-bits term may dominate.
Gupta et al.~\cite{GHSV06} showed how to achieve space $m\lg\frac{n}{m} +
\bigo(m\llg\frac{n}{m}+\lg n)$ bits, which largely reduces the dependence
on $n$, but now $\BinRank$ and $\BinSelect$ are supported in
$\bigo(\lg m)$ time via binary search \cite[Theorem 17 p.~153]{Gup07}.

For larger alphabets, of size $r = \bigo(\textrm{polylog}(n))$, 
Ferragina et al.~\cite{FMMN07} showed how to represent the sequence 
within $n\entropy_0(S) + o(n\lg r)$ bits and support $\StrRank$
and $\StrSelect$ in constant time.
Golynski et al.~\cite[Lemma 9]{GRR08} improved the space to $n\entropy_0(S) +
o(n\lg r/\lg n)$ bits while retaining constant times.

Grossi et al.~\cite{GGV03} introduced the so-called {\em wavelet tree}, which 
decomposes an arbitrary sequence into several bitmaps. By representing the 
bitmaps in compressed form \cite{GGGRR07}, the overall space is $n\entropy_0(S) +
o(n)$ and $\StrRank$ and $\StrSelect$ are supported in time
$\bigo(\lg r)$. Multiary wavelet trees decompose the sequence into
subsequences over a sublogarithmic-sized alphabet and reduce the time to
$\bigo(1+\lg r/\llg n)$ \cite{FMMN07,GRR08}.

In this article $n$ will generally denote the length of the permutation.
All of our $o()$ expressions, even those including several variables, will
be asymptotic in $n$.

\subsection{Measures of Presortedness in Permutations}
\label{sec:meas-disord-perm}

The complexity of \emph{adaptive algorithms}, for problems such as
searching, sorting, merging sorted arrays or convex hulls, is
studied in the worst case over instances of fixed size \emph{and
  difficulty}, for a definition of difficulty 
that is specific to each analysis.
Even though sorting a permutation in the comparison model requires
$\Theta(n\lg n)$ comparisons in the worst case over permutations of
$n$ elements, better results can be achieved for some parameterized
classes of permutations. We describe some of those below, see the
survey by Moffat and Petersson~\cite{anOverviewOfAdaptiveSorting}
for others.

Knuth~\cite{theArtOfComputerProgrammingVol3} considered \emph{runs}
(contiguous ascending subsequences) of a permutation $\pi$, counted by
$\nRuns=1+|\{i: 1 \leq i < n, \pi(i+1)<\pi(i)\}|.$
Levcopoulos and Petersson~\cite{sortingShuffledMonotoneSequences}
introduced \emph{Shuffled Up-Sequences} and its generalization
\emph{Shuffled Monotone Sequences}, respectively counted by
$\nSUS=\min\{k:\pi\textrm{ is covered by }k\textrm{ increasing }$
$\textrm{subsequences} \},$ and $\nSMS=\min\{k:\pi\textrm{ is covered
  by }k\textrm{ monotone}$ $\textrm{subsequences} \}$.
By definition, $\nSMS\leq\nSUS\leq\nRuns$.

Munro and Spira~\cite{sortingAndSearchingInMultisets} took an
orthogonal approach, considering the task of sorting multisets through
various algorithms such as MergeSort, showing that they can
be adapted to perform in time 
$\bigo(n(1+\entropy(\langle m_1,\ldots,m_r \rangle)))$
where $m_i$ is the number of occurrences of $i$ in the
multiset (note this is totally different from our results, that depend
on the distribution of the lengths of monotone runs).

Each adaptive sorting algorithm in the comparison model yields a
compression scheme for permutations, but the encoding thus defined
does not necessarily support the simple application of the permutation
to a single element without decompressing the whole permutation, nor
the application of the inverse permutation.

\section{Contiguous Monotone Runs}
\label{sec:basic-runs}

Our most fundamental representation takes advantage of permutations
that are formed by a few monotone (ascending or descending) runs.

\begin{definition} \label{def:runs}
  A \emph{down step} of a permutation $\pi$ over $[1..n]$ is a position
  $1 \le i < n$ such that $\pi(i+1)<\pi(i)$.
  An \emph{ascending run} in a permutation $\pi$ is a maximal range of
  consecutive positions $[i..j]$ that does not contain any
  down step.
  Let $d_1,d_2, \ldots,d_k$ be the list of consecutive down steps in $\pi$.
  Then the number of ascending runs of $\pi$ is noted $\nRuns = k+1$,
  and the sequence of the lengths of the ascending runs is noted
  $\vRuns = \langle n_1,n_2, \ldots,n_\nRuns\rangle$, where $n_1=
  d_1,n_2=d_2-d_1, \ldots,n_{\nRuns-1}= d_k-d_{k-1},$ and
  $n_\nRuns=n-d_k$.
  (If $k=0$ then $\nRuns=1$ and $\vRuns=\langle n_1 \rangle = 
   \langle n \rangle$.)
  The notions of \emph{up step} and \emph{descending run} are defined
  similarly.
\end{definition}

For example, the permutation
$(\mathit{1},\mathit{3},\mathit{5},\mathit{7},\mathit{9},\mathbf{2},\mathbf{4},\mathbf{6},\mathbf{8},\mathbf{10})$
contains $\nRuns=2$ ascending runs, of lengths forming the vector $\vRuns=\langle 5,5 \rangle$.

We now describe a data structure that represents a permutation
partitioned into $\nRuns$ ascending runs, and is able to compute 
any $\pi(i)$ and $\pi^{-1}(i)$.

\subsection{Structure} \label{sec:structure}

  \paragraph*{Construction}
  We find the down-steps of $\pi$ in linear time, obtaining $\nRuns$ runs
  of lengths $\vRuns = \langle n_1,\ldots,n_\nRuns\rangle$, and then apply 
  the Huffman algorithm to the vector $\vRuns$.
  When we set up the leaves $v$ of the Huffman tree, 
  we store their original index in $\vRuns$, $idx(v)$, 
  and the starting position in $\pi$ of their corresponding run, $pos(v)$.
  After the tree is built, we use $idx(v)$ to compute a permutation $\phi$ over
  $[1..\nRuns]$ so that $\phi(i)=j$ if the leaf corresponding to
  $n_i$ is placed at the $j$th left-to-right leaf in the Huffman
  tree. We also compute $\phi^{-1}$. 
  We also precompute a bitmap $C[1..n]$ that marks the beginning of runs in 
$\pi$ and give constant-time support for $\BinRank$ and $\BinSelect$. Since 
$C$ contains only $\nRuns$ bits set out of $n$, it is represented in
compressed form \cite{GGGRR07} within $\nRuns \lg\frac{n}{\nRuns} + o(n)$ bits.
   
  Now we set a new permutation $\pi'$ over $[1..n]$
  where the runs are written in the order given by $\phi^{-1}$: 
  We first copy from $\pi$ the run whose endpoints are those
  of the leftmost tree leaf, then the run pointed by the second
  leftmost leaf, and so on. 
  Simultaneously, we compute $pos'(v)$ for the leaves $v$, denoting the 
  starting position of the area they cover in $\pi'$.
  After creating $\pi'$ the original permutation $\pi$ can be
  deleted.
  We say that an internal node {\em covers} the contiguous area of $\pi'$
  formed by concatenating the runs of all the leaves that descend from $v$.
  We compute, for all nodes $v$, 
  $pos'(v)$, the starting position of the area covered by $v$ in $\pi'$,
  $\mathit{length}(v)$, the size of that area, and
  $\mathit{leaves}(v)$, the number of leaves that descend from $v$. 

  Now we enhance the Huffman tree into a wavelet-tree-like
  structure~\cite{GGV03} without altering its shape, as
  follows. 
  Starting from the root, first process recursively each child. 
  For the leaves we do nothing.
  Once the left and right children, $v_l$ and $v_r$, of an internal node $v$ 
  have been processed, the invariant is that the areas they cover have 
  already been sorted.
  We create a bitmap for $v$, of size $\mathit{length}(v)$.
  Now we merge the areas of $v_l$ and $v_r$ in time $\bigo(\mathit{length}(v))$.
  As we do the merging, each time we take an element from $v_l$
  we append a bit $0$ to the node bitmap, and a bit $1$ when we take
  an element from $v_r$.
  When we finish, $\pi'$ has been sorted and we can delete
  it. The Huffman-shaped wavelet tree (only with fields $\mathit{leaves}$ and
  $pos$), $\phi$, and $C$ represent $\pi$.  

  \paragraph*{Space and construction cost}
  Note that each of the $n_i$ elements of leaf $i$ (at depth $\ell_i$) is
  merged $\ell_i$ times, contributing $\ell_i$ bits to the bitmaps of its 
  ancestors, and thus the total number of bits in all bitmaps is 
  $\sum n_i \ell_i$.
  Thus the total number of bits in the Huffman-shaped wavelet tree is at 
  most $n(1+\entropy(\vRuns))$. 
  Those bitmaps, however, are represented in compressed form \cite{GGGRR07},
which allows us removing the $n$ extra bits added by the Huffman encoding.

Let us call $m_j = n_{\phi^{-1}(j)}$ the length of the run corresponding
to the $j$th left-to-right leaf, and $m_{i,j} = m_i + \ldots + m_j$. The 
compressed representation \cite{GGGRR07} takes, on a bitmap of length $n$ and
$m$ 1s, $m\lg\frac{n}{m} + (n-m)\lg\frac{n}{n-m}$ bits, plus a redundancy of 
$\bigo(n\lg\lg n/\lg^2 n)$ bits.
We prove by induction (see also Grossi et al.~\cite{GGV03})
that the compressed space allocated for all the bitmaps 
descending from a node covering leaves $[i..k]$ is 
$\sum_{i \le r \le k} m_r \lg\frac{m_{i,k}}{m_r}$ (we consider the redundancy 
later).
Consider two sibling leaves merging two runs of $m_i$ and $m_{i+1}$ elements. 
Their parent bitmap contains $m_i$ 0s and $m_{i+1}$ 1s, and 
thus its compressed representation requires $m_i\lg\frac{m_i+m_{i+1}}{m_i}
+ m_{i+1}\lg\frac{m_i+m_{i+1}}{m_{i+1}}$ bits.
Now consider a general Huffman tree node merging a left subtree covering 
leaves $[i..j]$ and a right subtree covering leaves $[j+1..k]$. Then the 
bitmap of the node will be compressed to
$m_{i,j}\lg\frac{m_{i,k}}{m_{i,j}} + m_{j+1,k}\lg\frac{m_{i,k}}{m_{j+1,k}}$
bits. By the inductive hypothesis, all the bitmaps on the left child and its
subtrees add up to $\sum_{i \le r \le j} m_r \lg\frac{m_{i,j}}{m_r}$, and
those on the right add up to 
$\sum_{j+1 \le r \le k} m_r \lg\frac{m_{j+1,k}}{m_r}$. Adding up the three
formulas we get the inductive thesis.

Therefore, a compressed representation of the bitmaps requires 
$n\entropy(\vRuns)$ bits, plus the redundancy. The latter, added over all 
the bitmaps, is
$\bigo(n(1+\entropy(\vRuns))\lg\lg n/\lg^2 n) = o(n)$ because
$\entropy(\vRuns) \le \lg n$.%
\footnote{To make sure this is $o(n)$ even if there are many short bitmaps,
we can concatenate all the bitmaps into a single one, and replace pointers
to bitmaps by offsets to this single bitmap. Operations $\BinRank$ and
$\BinSelect$ translate easily into a concatenated bitmap.}
  To this we must add the $\bigo(\nRuns\lg n)$ bits of the tree pointers
  and extra data like $pos$ and $\mathit{leaves}$, the 
  $\bigo(\nRuns\lg \nRuns)$ bits for $\phi$, and the 
  $\nRuns\lg\frac{n}{\nRuns}+o(n)$ bits for $C$.
  
  The construction time is $\bigo(\nRuns\lg\nRuns)$ for the Huffman
  algorithm, plus $\bigo(\nRuns)$ for computing $\phi$ and filling
  the node fields like $pos$ and $\mathit{leaves}$,
  plus $\bigo(n)$ for constructing $\pi'$ and $C$, plus the total 
  number of bits appended to all bitmaps, which includes the merging cost. 
  The extra structures for $\BinRank$ are built in
  linear time on those bitmaps.%
\footnote{While the linear construction time is not obvious from their
article \cite{GGGRR07}, a subsequent result \cite{Pat08} achieved even less
redundancy and linear construction time.}
  All this adds up to $\bigo(n(1+\entropy(\vRuns)))$, because
  $\nRuns \lg\nRuns \le n\entropy(\vRuns) + \lg n$ by
  concavity, recall Definition~\ref{def:entrop}. 

\subsection{Queries}

  \paragraph*{Computing $\pi$ and $\pi^{-1}$}
  One can regard the wavelet tree as a device that tracks the
  evolution of a merge-sorting of $\pi'$, so that in the
  bottom we have (conceptually) the sequence $\pi'$ (with one
  run per leaf) and in the top we have (conceptually) the sorted
  permutation $(1, 2, \ldots, n)$.
  
  To compute $\pi^{-1}(j)$ we start at the top and find out
  where that position came from in $\pi'$. 
  We start at offset $j' = j$ of the root bitmap $B$.  
  If $B[j']=0$, then position $j'$ came from the left subtree in the
  merging. 
  Thus we go down to the left child with $j'
  \leftarrow\BinRank_0(B,j')$, which is the position of $j'$ in the
  array of the left child before the merging.  
  Otherwise we go down to the right child with $j' \leftarrow
  \BinRank_1(B,j')$. We continue recursively until we reach a
  leaf $v$. At this point we know that $j$ came from the corresponding
  run, at offset $j'$, that is, $\pi^{-1}(j) = pos(v)+j'-1$.

  To compute $\pi(i)$ we do the reverse process, but we must
  first determine the leaf $v$ and offset $i'$ within $v$ corresponding
  to position $i$: We compute $l=\phi(\BinRank_1(C,i))$, so that $i$ falls
  at the $l$th left-to-right leaf.
  Then we traverse the Huffman tree down so as to find the
  $l$th leaf. This is easily done as we have $\mathit{leaves}(v)$ stored
  at internal nodes. Upon arriving at leaf $v$, we know that the offset is
  $i' = i - pos(v) + 1$. We now start an
  upward traversal from $v$ using the nodes that are already in the
  recursion stack. If $v$ is a left child of its parent $u$, then we
  set $i' \leftarrow \BinSelect_0(B,i')$ to locate it in the merged
  array of the parent, else we set $i' \leftarrow \BinSelect_1(B,i')$,
  where $B$ is the bitmap of $u$. Then we set $v \leftarrow u$ and
  continue until reaching the root, where we answer $\pi(i) = i'$.

  \paragraph*{Query time}

  In both queries the time is $\bigo(\ell)$, where $\ell$ is the depth
  of the leaf arrived at. If $i$ is chosen uniformly at random in
  $[1..n]$, then the average cost is $\frac{1}{n}\sum n_i\ell_i =
  \bigo(1+\entropy(\vRuns))$. 
  However, the worst case can be $\bigo(\nRuns)$ in a fully
  skewed tree. We can ensure $\ell = \bigo(\lg\nRuns)$ in the
  worst case while maintaining the average case by slightly
  rebalancing the Huffman tree \cite{ML01}.
  Given any constant $x>0$, the height of the Huffman tree can be bound
  to at most $(1+x)\lg\nRuns$ so that the total number of
  bits added to the encoding is at most $n \cdot
  \nRuns^{-x\lg\varphi}$, where $\varphi \approx 1.618$ is the
  golden ratio. This is $o(n)$ if $\nRuns=\omega(1)$, and
  otherwise the cost was $\bigo(\nRuns)=\bigo(1)$ anyway. 
  Similarly, the average time stays
  $\bigo(1+\entropy(\vRuns))$, as it increases at most by
  $\bigo(\nRuns^{-x\lg\varphi}) = \bigo(1)$.
  This rebalancing takes just $\bigo(\nRuns)$ time if the frequencies are 
  already sorted.

  Note also that the space required by the query is
  $\bigo(\lg\nRuns)$.  
  This can be made constant by storing parent pointers in the wavelet
  tree, which does not change the asymptotic space.

\begin{theorem}
  There is an encoding scheme using at most
  $n\entropy(\vRuns) + \bigo(\nRuns\lg n) + o(n)$ bits to represent
  a permutation $\pi$ over $[1..n]$ covered by $\nRuns$ contiguous 
  ascending runs of lengths forming the vector $\vRuns$.
  It can be built within time $\bigo(n(1+\entropy(\vRuns)))$, and
  supports the computation of $\pi(i)$ and $\pi^{-1}(i)$ in time
  $\bigo(1+\lg\nRuns)$ for any value of $i\in[1..n]$.
  If $i$ is chosen uniformly at random in $[1..n]$ then the average
  computation time is $\bigo(1+\entropy(\vRuns))$.
\label{thm:main}
\end{theorem}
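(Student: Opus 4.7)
The plan is to assemble the pieces already laid out in Section~\ref{sec:structure} into a verification of each of the four claims---space, construction time, worst-case query time, and average query time---with the analytic core being the inductive bound on the compressed wavelet-tree bitmaps.

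For space, I would first observe that the total uncompressed length of the wavelet-tree bitmaps is $\sum n_i \ell_i$, since each element of the $i$th run, at depth $\ell_i$, contributes exactly one bit to each bitmap along its root-to-leaf path. Then, by the telescoping induction over Huffman subtrees that is already sketched---base case: two sibling leaves merging runs of lengths $m_i$ and $m_{i+1}$; inductive step: a node with children covering $[i..j]$ and $[j+1..k]$---the compressed size summed over all bitmaps collapses to $\sum_r m_r \lg(n/m_r) = n\,\entropy(\vRuns)$ bits. The per-bitmap redundancy of~\cite{GGGRR07} contributes $o(n)$ once the bitmaps are concatenated (as in the footnote) so that the redundancy is not paid per node. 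Tree pointers and the $pos$ and $\mathit{leaves}$ fields account for $\bigo(\nRuns \lg n)$ bits, $\phi$ and $\phi^{-1}$ for $\bigo(\nRuns \lg \nRuns)$ bits, and $C$ for $\nRuns \lg \frac{n}{\nRuns} + o(n)$ bits, all absorbed into the stated bound.

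For construction time, Huffman runs in $\bigo(\nRuns \lg \nRuns)$; the permutations $\phi$, $\phi^{-1}$, the node fields, $\pi'$ and $C$ are built in $\bigo(n)$; the wavelet-tree bitmaps are produced in time proportional to their total length $\sum n_i \ell_i < n(1+\entropy(\vRuns))$ by the recursive merging; and the rank/select indices are built in linear time by~\cite{Pat08}. Since $\nRuns \lg \nRuns \le n\,\entropy(\vRuns) + \lg n$ by Definition~\ref{def:entrop}, everything fits in $\bigo(n(1+\entropy(\vRuns)))$. For queries, correctness follows by regarding the tree as the trace of merge-sorting $\pi'$ into $(1,\dots,n)$: the downward traversal for $\pi^{-1}(j)$ mirrors the merging via $\BinRank_0$/$\BinRank_1$ until reaching the leaf that owns position $j$, and the dual upward traversal for $\pi(i)$, after first mapping $i$ through $C$ and $\phi$ to the correct leaf, uses $\BinSelect_0$/$\BinSelect_1$ to lift the offset to the root. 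Each query does constant work per visited node and touches $\ell$ nodes where $\ell$ is the depth of the leaf reached. Averaging over a uniformly random $i$ weighs leaf $i$ by $n_i/n$, giving $\frac{1}{n}\sum n_i \ell_i < 1+\entropy(\vRuns)$ by Huffman's optimality. The worst-case $\bigo(\lg \nRuns)$ bound is then obtained via the depth-bounded Huffman construction of~\cite{ML01}: for any constant $x>0$ the height is capped at $(1+x)\lg \nRuns$ while the extra encoded bits total $n \cdot \nRuns^{-x \lg \varphi} = o(n)$ and the average cost grows by only $\bigo(1)$.

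The main obstacle is the bitmap space analysis: the standard Huffman length bound only gives $n(1+\entropy(\vRuns))$, wasting an additive $\Theta(n)$ term that would prevent the claimed $o(n)$ redundancy. The telescoping induction over the Huffman tree is what removes this slack, and the concatenation trick is needed to avoid paying the $\bigo(n \lg\lg n/\lg^2 n)$ redundancy separately per node, which could otherwise blow up with $\nRuns$ when many bitmaps are short.
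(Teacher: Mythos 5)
Your proposal follows the paper's own proof essentially step by step: the same Huffman-shaped wavelet tree over $\pi'$, the same telescoping induction over subtrees to replace the $n(1+\entropy(\vRuns))$ Huffman length bound by $n\entropy(\vRuns)$ after compression, the same concatenation device to keep the [GGGRR07] redundancy at $o(n)$, and the same [ML01] depth-bounding to get the $\bigo(\lg\nRuns)$ worst case. You also correctly single out the inductive bitmap bound as the step that removes the extra $\Theta(n)$ term; this matches the paper's argument, so there is nothing further to add.
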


We note that the space analysis leading to $n\entropy(\vRuns)+o(n)$ bits
works for any tree shape. We could have used a balanced tree, yet we would
not achieve $\bigo(1+\entropy(\vRuns))$ average time. On the other hand,
by using Hu-Tucker codes instead of Huffman, as in our previous work 
\cite{BN09}, we would not need the permutation $\phi$ and, by using
compact tree representations \cite{SN10}, we would be able to reduce the
space to $n\entropy(\vRuns) + \bigo(\nRuns\lg\frac{n}{\nRuns}) +
o(n)$. This is interesting for large values of $\nRuns$, as it
is always $n\entropy(\vRuns) + o(n(1+\entropy(\vRuns))$ even if
$\nRuns = \Theta(n)$.%
\footnote{We do not follow this path because we are more interested in
multiary codes (see Section~\ref{sec:multiary}) and, to the best of our
knowledge, there is no efficient (i.e., $\bigo(\nRuns\lg\nRuns)$ time)
algorithm for building multiary Hu-Tucker codes \cite{theArtOfComputerProgrammingVol3}.}

\subsection{Mixing Ascending and Descending Runs}

We can easily extend Theorem~\ref{thm:main} to mix ascending and descending
runs.

\begin{corollary} \label{cor:main}
Theorem~\ref{thm:main} holds verbatim if $\pi$ is partitioned into a sequence
$\nRuns$ contiguous monotone 
(i.e., ascending or descending) runs of lengths forming
the vector $\vRuns$.
\end{corollary}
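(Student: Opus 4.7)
The plan is to reduce the mixed case to Theorem~\ref{thm:main} by logically reversing each descending run so that every leaf of the Huffman tree represents an ascending sequence. A single linear scan of $\pi$ detects the $\nRuns$ run boundaries (at each position where the monotonicity direction changes), recording each run's length to build $\vRuns$ and its direction in a new bitmap $D[1..\nRuns]$ (say $D[k]=1$ iff the $k$th run of $\pi$ is descending).

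Construction then proceeds exactly as in Section~\ref{sec:structure}, with a single change: when copying a run from $\pi$ into $\pi'$ in the order given by $\phi^{-1}$, an ascending run is copied unchanged while a descending run is copied in reverse. After this step every leaf's area of $\pi'$ is already sorted in ascending order, so the merge-sorting that builds the Huffman-shaped wavelet tree, the compressed-bitmap analysis, and the height-rebalancing of \cite{ML01} all apply verbatim. The extra $\nRuns$ bits of $D$ are absorbed into the $\bigo(\nRuns\lg n)$ term, the bitmap $C$ is built unchanged, and the linear reversal does not change the $\bigo(n(1+\entropy(\vRuns)))$ construction time.

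Queries differ from the ascending case only at the moment we enter or leave a leaf. For $\pi^{-1}(j)$ I descend as before to a leaf $v$ with internal offset $j'$ and consult $D$ at the original index of $v$'s run: if ascending, return $pos(v)+j'-1$; if descending, return $pos(v)+\mathit{length}(v)-j'$, since reversing the run flips offsets within the leaf area. For $\pi(i)$ I identify the run via $k=\BinRank_1(C,i)$, set the leaf via $\phi(k)$, and initialize the upward offset as $i'=i-pos(v)+1$ if $D[k]=0$ and $i'=pos(v)+\mathit{length}(v)-i$ if $D[k]=1$; the remaining $\BinSelect$ traversal is then identical to the ascending case.

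There is no real technical obstacle here; the only matter requiring care is verifying that the leaf-level reversal preserves every invariant used by the construction and by the upward and downward wavelet-tree traversals. Since the space bound, the construction time, the worst-case $\bigo(1+\lg\nRuns)$ query time, and the average-case $\bigo(1+\entropy(\vRuns))$ query time of Theorem~\ref{thm:main} all depend only on $\vRuns$ and on the Huffman-tree shape, and not on the direction of any individual run, they transfer verbatim, yielding the corollary.
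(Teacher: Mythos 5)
Your proposal is correct and uses essentially the same idea as the paper: reverse descending runs, record a direction bit per run, and reflect the offset within the run when entering or leaving a leaf. The only presentational difference is that the paper builds an intermediate permutation $\pi_{asc}$ (with descending runs reversed in place) and invokes Theorem~\ref{thm:main} as a black box, adjusting $\pi(i)$ and $\pi^{-1}(j)$ by the reflection $\ell+r-i$ outside the structure, whereas you fold the reversal into the step that copies runs into $\pi'$ and adjust offsets at leaf entry/exit; the two are equivalent and yield the same bounds.
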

\begin{proof}
We mark in a bitmap of length $\nRuns$ whether each run is ascending or 
descending, and then reverse descending runs in $\pi$, so as to obtain
a new permutation $\pi_{asc}$, which is represented using 
Theorem~\ref{thm:main} (some runs of $\pi$ could now be merged in 
$\pi_{asc}$, but this only reduces $\entropy(\vRuns)$, recall 
Definition~\ref{def:entrop}). 

The values $\pi(i)$ and $\pi^{-1}(j)$ are easily computed from
$\pi_{asc}$:
If $\pi^{-1}_{asc}(j) = i$, we use $C$ to determine that $i$ is within
run $\pi_{asc}(\ell..r)$, that is, $\ell = \BinSelect_1(\BinRank_1(C,i))$
and $r = \BinSelect_1(\BinRank_1(C,i)+1)-1$.
If that run is reversed in $\pi$, then $\pi^{-1}(j) = \ell+r-i$, else
$\pi^{-1}(j)=i$.
For $\pi(i)$, we use $C$ to determine that $i$ belongs to run
$\pi(\ell..r)$.  If the run is descending, then we return
$\pi_{asc}(\ell+r-i)$, else we return $\pi_{asc}(i)$.
The operations on $C$ require only constant time.  
The extra construction time is just $\bigo(n)$,
and no extra space is needed apart from $\nRuns = o(\nRuns\lg n)$ bits. 
\end{proof}

Note that, unlike the case of ascending runs, where there is an obviously 
optimal way of partitioning (that is, maximize the run lengths), we have some
freedom when partitioning into ascending or descending runs, at the endpoints
of the runs: If an ascending (resp. descending) run is followed by a 
descending (resp. ascending) run, the limiting element can be moved to either run;
if two ascending (resp. descending) runs are consecutive, one can create 
a new descending (resp. ascending) run with the two endpoint elements. While finding
the optimal partitioning might not be easy, we note that these decisions cannot
affect more than $\bigo(\nRuns)$ elements, and thus the entropy of the partition
cannot be modified by more than $\bigo(\nRuns\lg n)$, which is absorbed by the
redundancy of our representation.

\subsection{Improved Adaptive Sorting}

One of the best known sorting algorithms is MergeSort, based
on a simple linear procedure to merge two already sorted arrays,
and with a worst case complexity of $n\lceil\lg n\rceil$
comparisons and $\bigo(n\lg n)$ running time.
It had been already noted~\cite{theArtOfComputerProgrammingVol3}
that finding the down-steps of the array in linear time allows improving
the time of MergeSort to $\bigo(n(1+\lg\nRuns))$ (the down-step concept can 
be applied to general sequences, where consecutive equal values do not break 
runs).

We now show that the construction process of our data structure sorts the 
permutation and, applied on a general sequence, it achieves a refined 
sorting time of
$\bigo(n(1+\entropy(\vRuns))\subset\bigo(n(1+\lg\nRuns))$ (since
$\entropy(\vRuns)\le\lg\nRuns$).

\begin{theorem}
  There is an algorithm sorting an array of length $n$ covered by
  $\nRuns$ contiguous monotone runs of lengths forming the vector  $\vRuns$ in time
  $\bigo(n(1+\entropy(\vRuns)))$, which is worst-case optimal in the
  comparison model.
  \label{thm:mainsort}
\end{theorem}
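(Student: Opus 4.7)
The plan is to observe that the construction procedure described in Section~\ref{sec:structure} \emph{already sorts} the permutation as a byproduct: once we reduce to the ascending case (via Corollary~\ref{cor:main}, which reverses descending runs in $\bigo(n)$ time) and rearrange the runs into $\pi'$ according to the Huffman tree's left-to-right leaf order, the bottom-up processing of the tree merges each internal node's two already-sorted children arrays in linear time per node. When the root is processed, $\pi'$ has been transformed into the sorted sequence $(1,2,\ldots,n)$. Thus I would propose an algorithm that simply executes the construction (omitting, if desired, the bookkeeping for $pos$, $\phi$, $C$, and the bitmap compression, since we care only about sorting here).

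For the running time, I would bound the total merge cost. Each element belonging to leaf $i$ at depth $\ell_i$ is touched exactly $\ell_i$ times during the bottom-up merges, so the total work (and the total number of comparisons) is $\sum_i n_i\ell_i$. By Huffman's optimality this is at most $n(1+\entropy(\vRuns))$. Adding the $\bigo(n)$ cost of run detection and descending-run reversal, the $\bigo(\nRuns\lg\nRuns)$ cost of Huffman tree construction, and the $\bigo(n)$ cost to reorder runs into $\pi'$, the total is $\bigo(n(1+\entropy(\vRuns)))$, since $\nRuns\lg\nRuns \le n\entropy(\vRuns)+\lg n$ by concavity (Definition~\ref{def:entrop}).

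The remaining step is worst-case optimality in the comparison model. The argument is information-theoretic: the number of distinct permutations of $[1..n]$ that decompose into monotone runs of prescribed lengths $n_1,\ldots,n_\nRuns$ is at least the multinomial coefficient $\binom{n}{n_1,n_2,\ldots,n_\nRuns}$, counting the ways to interleave $\nRuns$ disjoint sorted sequences of those lengths into a single sequence of length $n$. Taking logarithms and applying Stirling yields $\lg\binom{n}{n_1,\ldots,n_\nRuns} = n\entropy(\vRuns) - \bigo(\nRuns\lg n)$, so any comparison-based algorithm that identifies the permutation within this class must perform $\Omega(n(1+\entropy(\vRuns)))$ comparisons in the worst case, matching our upper bound.

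The main obstacle I expect is making the lower bound argument tight for all regimes of $\nRuns$: for very small $\nRuns$ one needs the additive $+1$ term to absorb the $\bigo(n)$ linear scan cost of reading the input and verifying the run structure, and for $\nRuns$ close to $n$ one must check that the $\bigo(\nRuns\lg n)$ Stirling slack does not swallow the main term $n\entropy(\vRuns)$; both follow from the bounds $(\nRuns-1)\lg n \le n\entropy(\vRuns) \le n\lg\nRuns$ given in Definition~\ref{def:entrop}, so the bound $\Omega(n+n\entropy(\vRuns)) = \Omega(n(1+\entropy(\vRuns)))$ holds throughout.
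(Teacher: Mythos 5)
Your overall approach --- the sorting algorithm is the wavelet-tree construction itself, and the lower bound is information-theoretic via counting interleavings and applying Stirling --- is exactly the paper's, and the upper-bound analysis (total merge cost $\sum n_i\ell_i \le n(1+\entropy(\vRuns))$ plus the $\bigo(n)$ and $\bigo(\nRuns\lg\nRuns)$ overheads) matches.

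However, the lower-bound step has a genuine gap in the control of the Stirling slack. You write $\lg\frac{n!}{n_1!\ldots n_\nRuns!} = n\entropy(\vRuns) - \bigo(\nRuns\lg n)$ and then argue, via $(\nRuns-1)\lg n \le n\entropy(\vRuns)$, that the slack cannot swallow the main term. But that inequality only shows the slack is $\bigo(n\entropy(\vRuns))$, i.e., of the same order as the quantity it is subtracted from; with an adverse hidden constant the multinomial bound could be vacuous, leaving only the trivial $\Omega(n)$ from reading the input, which does not give $\Omega(n(1+\entropy(\vRuns)))$. The fix, which the paper uses, is a tighter estimate of the error: the dominant slack term is $\frac{1}{2}\sum_i\lg n_i$, and by concavity $\sum_i\lg n_i \le \nRuns\lg(n/\nRuns)$, which is $\bigo(n)$ uniformly in $\nRuns$. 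Hence the multinomial term is $n\entropy(\vRuns) - \bigo(n)$, and adding the $\Omega(n)$ reading cost yields $\Omega(n(1+\entropy(\vRuns)))$ in every regime. Your $\bigo(\nRuns\lg n)$ estimate is exactly too loose in the large-$\nRuns$ case you flag.

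A secondary point your proof does not address but the paper does: the permutations counted by $\frac{n!}{n_1!\ldots n_\nRuns!}$ include some whose natural run decomposition is coarser than $\vRuns$ (adjacent runs merging), hence with smaller entropy, so the count bounds a union of two classes. The paper argues separately --- using the fact that the run structure can be detected in $\bigo(n)$ time --- that the lower bound in fact applies to the class whose run vector is exactly $\vRuns$.
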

\begin{proof}
  Our wavelet tree construction of Theorem~\ref{thm:main} (and
  Corollary~\ref{cor:main}) indeed sorts
  $\pi$ within this time, and it also works if the array is not a
  permutation.
  This is optimal because, even considering just ascending
  runs, there are $\frac{n!}{n_1!n_2!\ldots n_\nRuns!}$ different
  permutations that can be covered with runs of lengths forming the
  vector $\vRuns = \langle n_1,n_2,\ldots, n_\nRuns\rangle$.
  Thus $\lg \frac{n!}{n_1!n_2! \ldots n_\nRuns!}$ comparisons
  are necessary. Using Stirling's approximation to the factorial we have
  $\lg \frac{n!}{n_1!n_2! \ldots n_\nRuns!}=
  (n+1/2)\lg n - \sum_i (n_i+1/2)\lg n_i - \bigo(\lg \nRuns)$. Since
  $\sum \lg n_i \le \nRuns \lg(n/\nRuns)$, this is
  $n\entropy(\vRuns)-\bigo(\nRuns\lg(n/\nRuns)) = n\entropy(\vRuns)-\bigo(n)$.
  The term $\Omega(n)$ is also necessary to read the input, hence
  implying a lower bound of $\Omega(n(1+\entropy(\vRuns)))$.

Note, however, that the set of permutations that {\em can be} covered with 
$\nRuns$ runs of lengths $\vRuns$, may contain permutations that can be
covered with fewer runs (as two consecutive runs could be merged), and thus
they have entropy less than $\entropy(\vRuns)$, recall
Definition~\ref{def:entrop}. We have proved that the lower
bound applies to the union of two classes: one (1) contains (some%
\footnote{Other permutations with vectors distinct from $\vRuns$ could also
have entropy $\entropy(\vRuns)$.}) permutations
of entropy $\entropy(\vRuns)$ and the other (2) contains (some) permutations 
of entropy less than $\entropy(\vRuns)$. Obviously the bound does not hold for 
class (2) alone, as we can sort it in less time. 
Since we can tell the class of a permutation in $\bigo(n)$ 
time by counting the down-steps, it follows that the bound also applies to 
class (1) alone (otherwise $\bigo(n) + o(n\entropy(\vRuns))$ would be 
achievable for (1)$+$(2)).
\end{proof}

\subsection{Boosting Time Performance} \label{sec:multiary}

The time performance achieved in Theorem~\ref{thm:main} (and
Corollary~\ref{cor:main}) can be boosted
by an $\bigo(\lg\lg n)$ time factor by using Huffman codes of higher
arity. 

Given the run lengths $\vRuns$, we build the $t$-ary Huffman tree for
$\vRuns$, with $t=\sqrt{\lg n}$. Since now we merge $t$ children to build the 
parent, the sequence stored in the parent to indicate the child each element 
comes from is not binary, but over alphabet $[1..t]$. In addition, we set up
$\nRuns$ pointers to provide direct access to the leaves, and parent 
pointers.

The total length of all the sequences stored at all the Huffman tree nodes
is $< n(1+\entropy(\vRuns)/\lg t)$ \cite{Huf52}. 
To reduce the redundancy, we represent
each sequence $S[1..m]$ stored at a node using the compressed representation
of Golynski et al.~\cite[Lemma 9]{GRR08}, which yields space 
$m\entropy_0(S)+\bigo(m\lg t \lg\lg m/ \lg^2 m)$ bits.

For the string $S[1..m]$ corresponding to a leaf covering run lengths
$m_1, \ldots, m_t$, we have $m\entropy_0(S) = \sum m_i \lg\frac{m}{m_i}$.
From there we can carry out exactly the same analysis done in
Section~\ref{sec:structure} for binary trees, to conclude that the sum
of the $m\entropy_0(S)$ bits for all the strings $S$ over all the tree
nodes is $n\entropy(\vRuns)$. On the other hand, the redundancies add up to
$\bigo(n(1+\entropy(\vRuns)/\lg t)\lg t \lg\lg n / \lg^2 n) = o(n)$ bits.%
\footnote{Again, we can concatenate all the sequences to make sure this
redundancy is asymptotic in $n$.}

The advantage of the $t$-ary representation is that the average leaf depth
is $1+\entropy(\vRuns)/\lg t = \bigo(1+\entropy(\vRuns)/\lg\lg n)$. The
algorithms to compute $\pi(i)$ and $\pi^{-1}(i)$ are similar, except that
$\BinRank$ and $\BinSelect$ are carried out on sequences $S$ over alphabets of
size $\sqrt{\lg n}$. Those operations can still be carried out in constant
time on the representation we have chosen \cite{GRR08}. The only detail is
that, for $\pi(i)$ we first moved from the root to the leaf using the field
$\mathit{leaves}(v)$. This does not anymore allow us processing a node in
constant time, and thus we have opted for storing an array of pointers to 
the leaves and parent pointers.

For the worst case, if $\nRuns=\omega(1)$, we can again limit the depth of the 
Huffman tree to $\bigo(\lg \nRuns / \lg\lg n)$ and maintain the same average 
time. The multiary case is far less understood than the binary case. 
Recently, an algorithm to find the optimal length-restricted $t$-ary code 
has been presented whose running time is linear once the lengths are sorted
\cite{Bae07}. To analyze the increase in redundancy, consider the sub-optimal
method that simply takes any node $v$ of depth more than $\ell=4\lg\nRuns /
\lg t$ and balances its subtree (so that height $5 \lg \nRuns / \lg t$ is 
guaranteed). Since any node at depth $\ell$ covers a total length of at most
$n/t^{\lfloor \ell/2\rfloor}$ (see next paragraph), the sum of all the lengths 
covered by these nodes is at most $\nRuns\cdot n/t^{\lfloor \ell/2\rfloor}$. 
By forcing those subtrees to be balanced, the average leaf depth increases by 
at most $(\lg\nRuns/\lg t)~\nRuns/t^{\lfloor \ell/2\rfloor} \le
 \lg(\nRuns)/ (\nRuns \lg t) = \bigo(1)$. Hence the worst case is limited to 
$\bigo(1+\lg \nRuns /\lg\lg n)$ while the average case stays
within $\bigo(1+\entropy(\vRuns)/\lg\lg n)$. For the space we need a finer
consideration: As $\nRuns=\omega(1)$, the increase in average leaf depth is 
$o(1/\lg t)$. Since increasing by one the depth of a leaf covering $m$ 
elements costs $m\lg t$ further bits, the total increase in space redundancy 
is $o(n)$. 

The limit on the probability is obtained as follows. Consider a node $v$ in
the $t$-ary Huffman tree. Then $\mathit{length}(u) \ge \mathit{length}(v)$ for 
any uncle $u$ of $v$, as otherwise switching $v$ and $u$ improves the already 
optimal Huffman tree. Hence $w$, the grandparent of $v$ (i.e., the parent of 
$u$) must cover an area of size $\mathit{length}(w) \ge t \cdot
\mathit{length}(v)$. Thus the covered length is 
multiplied at least by $t$ when moving from a node to its grandparent. 
Conversely, it is divided at least by $t$ as we move from a node to any 
grandchild. As the total length at the root is $n$, the length covered by
any node $v$ at depth $\ell$ is at most $\mathit{length}(v) \le
n/t^{\lfloor \ell/2\rfloor}$.

This yields our final result for contiguous monotone runs.

\begin{theorem}
  There is an encoding scheme using at most
  $n\entropy(\vRuns) + \bigo(\nRuns\lg n) + o(n)$ bits to encode
  a permutation $\pi$ over $[1..n]$ covered by $\nRuns$ contiguous monotone
  runs of lengths forming the vector $\vRuns$.
  It can be built within time $\bigo(n(1+\entropy(\vRuns)/\lg\lg n))$, and
  supports the computation of $\pi(i)$ and $\pi^{-1}(i)$ in time
  $\bigo(1+\lg\nRuns / \lg\lg n)$ for any value of $i\in[1..n]$.
  If $i$ is chosen uniformly at random in $[1..n]$ then the average
  computation time is $\bigo(1+\entropy(\vRuns)/\lg\lg n)$.
\label{thm:main2}
\end{theorem}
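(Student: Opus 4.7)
The plan is to adapt the binary wavelet-tree construction of Theorem~\ref{thm:main} to a $t$-ary Huffman tree with $t=\sqrt{\lg n}$, so that every factor involving the tree height or the encoding length gains a $\lg t = \Theta(\lg\lg n)$ denominator. First I would build a $t$-ary Huffman tree on the length vector $\vRuns$ and, exactly as in Section~\ref{sec:structure}, permute the runs of $\pi$ to match the left-to-right order of leaves (storing $\phi$ and $\phi^{-1}$), compute the bitmap $C$ marking run beginnings, and then merge groups of $t$ siblings bottom-up. The only structural change is that each internal node now stores a sequence $S$ over alphabet $[1..t]$ (recording, at each merged position, which of the $t$ children contributed it) instead of a bitmap; I represent $S$ with the encoding of Golynski et al.~\cite[Lemma~9]{GRR08}, which supports $\StrRank$ and $\StrSelect$ in constant time on $[1..\sqrt{\lg n}]$.

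For the space bound I would reuse the inductive argument of Section~\ref{sec:structure}, replacing binary zero-order entropy by $t$-ary zero-order entropy: at a node merging children covering lengths $m_{I_1},\ldots,m_{I_t}$, the term $m\entropy_0(S)=\sum_j m_{I_j}\lg(m/m_{I_j})$ telescopes so that summing over all internal nodes yields exactly $n\entropy(\vRuns)$. The per-node redundancy $\bigo(m\lg t\,\llg m/\lg^2 m)$ aggregates, using the total length bound $n(1+\entropy(\vRuns)/\lg t)$ and $\entropy(\vRuns)\le \lg n$, to $o(n)$, absorbable by concatenating all sequences into one global string as in the footnote. Adding the $\bigo(\nRuns\lg n)$ bits for $\phi$, the leaf pointer array, parent pointers, node fields, and $C$, the total is $n\entropy(\vRuns)+\bigo(\nRuns\lg n)+o(n)$.

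For queries, the computation of $\pi^{-1}(j)$ is the same downward walk as before, using $\StrRank_c$ at each node with $c$ the child index extracted from $S[j']$, and each step costs $\bigo(1)$. For $\pi(i)$ I first locate the leaf with $C$ and $\phi$, and \emph{jump} directly to the leaf via the precomputed leaf array (since we no longer traverse with $\mathit{leaves}(v)$), then ascend using the parent pointers, applying $\StrSelect_c$ at each node. On a $t$-ary Huffman tree the average leaf depth is $1+\entropy(\vRuns)/\lg t=\bigo(1+\entropy(\vRuns)/\lg\lg n)$ and the worst-case depth can be made $\bigo(1+\lg\nRuns/\lg t)$ by a length-restricted multiary Huffman construction~\cite{Bae07}; the key geometric fact underlying the analysis is $\mathit{length}(v)\le n/t^{\lfloor \ell/2\rfloor}$ at depth $\ell$, derived from the optimality sibling/uncle argument already given above. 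I would then verify that enforcing this length restriction inflates the average depth by $o(1/\lg t)$, so that the space remains $n\entropy(\vRuns)+o(n)$.

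The construction cost follows the same accounting as in Theorem~\ref{thm:main}: $\bigo(\nRuns\lg\nRuns)$ for the multiary Huffman algorithm (dominated by $n\entropy(\vRuns)/\lg\lg n+o(n)$ via concavity as in Definition~\ref{def:entrop}), $\bigo(n)$ for $\phi$, $C$, and the permuted array $\pi'$, and the merging cost, which equals the total number of symbols appended across all sequences, i.e., $\bigo(n(1+\entropy(\vRuns)/\lg\lg n))$. The main obstacle I anticipate is the worst-case depth bound: making sure that length-restricted multiary Huffman can be built in the budgeted time and that the induced redundancy is indeed $o(n)$ rather than $\bigo(n)$; this is precisely the finer analysis (using the fact that $\nRuns=\omega(1)$ forces the average-depth increase to be $o(1/\lg t)$) that is sketched in the text preceding the theorem and that I would formalize here.
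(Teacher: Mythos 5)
Your proposal follows essentially the same route as the paper: a $t$-ary Huffman tree with $t=\sqrt{\lg n}$, sequences over $[1..t]$ at internal nodes represented with Golynski et al.\ \cite[Lemma~9]{GRR08}, the same telescoping entropy analysis, direct leaf pointers plus parent pointers for $\pi(i)$, the geometric bound $\mathit{length}(v)\le n/t^{\lfloor\ell/2\rfloor}$, and the length-restricted multiary Huffman step \cite{Bae07} with the $o(1/\lg t)$ average-depth-increase argument. Nearly all of this matches the paper's proof.

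The one genuine gap is in the construction-time accounting. You assert that the merging cost ``equals the total number of symbols appended across all sequences,'' i.e.\ $\bigo(n(1+\entropy(\vRuns)/\lg\lg n))$, but this is not automatic for a $t$-way merge: with a standard priority queue over $t$ heads, merging a block of length $m$ costs $\bigo(m\lg t)$, not $\bigo(m)$, which would cancel the entire $\lg\lg n$ speedup and yield $\bigo(n(\lg\lg n+\entropy(\vRuns)))$. The paper closes this gap by performing each $t$-way merge with atomic heaps \cite{FW94atomic} (applicable since $t=\sqrt{\lg n}$ is polylogarithmic), achieving $\bigo(m)$ per node; this also means the construction leaves the comparison model, a caveat the paper makes explicit and that your write-up should note. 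Once you add the atomic-heap step (and the observation that the compressed sequence representations of \cite{GRR08} are built in linear time), the construction-time bound is justified and the rest of your argument goes through.
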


The only missing part is the construction time, since now we have to build
strings $S[1..m]$ by merging $t$ increasing runs. This can be done in 
$\bigo(m)$ time by using atomic heaps~\cite{FW94atomic}. The compressed
sequence representations are built in linear time \cite{GRR08}. 
Note this implies
that we can sort an array with $\nRuns$ contiguous monotone runs of lengths
forming the vector $\vRuns$ in time $\bigo(n(1+\entropy(\vRuns)/\lg\lg n))$,
yet we are not anymore within the comparison model.

\subsection{An Improved Sequence Representation}

Interestingly, the previous result yields almost directly a new representation 
of 
sequences that, compared to the state of the art \cite{FMMN07,GRR08}, provides 
improved average time performance.

\begin{theorem}
  Given a string $S[1..n]$ over alphabet $[1..\sigma]$ with zero-order entropy
  $\entropy_0(S)$, there is an encoding for $S$ using at most
  $n\entropy_0(S) + \bigo(\sigma\lg n) + o(n)$ bits and answering
  queries $S[i]$, $\StrRank_c(S,i)$ and $\StrSelect_c(S,i)$ in time
  $\bigo(1+\lg\sigma/\lg\lg n)$ for any $c \in [1..\sigma]$ and $i \in [1..n]$.
  When $i$ is chosen at random in query $S[i]$, or $c$ is chosen with
  probability $n_c/n$ in queries $\StrRank_c(S,i)$ and $\StrSelect_c(S,i)$, 
  where $n_c$ is the frequency of $c$ in $S$, the average query time is
  $\bigo(1+\entropy_0(S)/\lg\lg n)$.
\label{thm:string}
\end{theorem}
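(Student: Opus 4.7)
The plan is to build, on top of the string $S$, the multiary Huffman-shaped wavelet tree described in Section~\ref{sec:multiary}, but where the frequencies driving the Huffman code are now the character frequencies $\langle n_1, n_2, \ldots, n_\sigma\rangle$ of $S$ instead of the run-length vector $\vRuns$. I pick arity $t=\sqrt{\lg n}$, and at each internal node $v$ I store a sequence $S_v$ over alphabet $[1..t]$ whose $k$-th symbol tells which of the $t$ children of $v$ contains the $k$-th character (of those descending to $v$). Each $S_v$ is encoded with the Golynski et al.\ representation~\cite[Lemma~9]{GRR08}, which gives $\StrRank$, $\StrSelect$, and access in $\bigo(1)$ time on sublogarithmic alphabets. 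I also add $\sigma$ pointers from the root to the leaves and parent pointers, for $\bigo(\sigma\lg n)$ extra bits.

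For the space, I would reuse verbatim the inductive argument from Section~\ref{sec:structure}: if the children of a node cover character-sets whose total lengths are $m_1,\ldots,m_t$, the zero-order entropy of the node's sequence is $\sum m_i \lg(m/m_i)$, and summing these contributions over the whole tree telescopes into $n\entropy_0(S)$. The redundancy of each compressed sequence is $\bigo(m\lg t\lg\lg m/\lg^2 m)$, and summed over all nodes (each position appears in at most $1+\entropy_0(S)/\lg t$ nodes) it amounts to $\bigo(n(1+\entropy_0(S)/\lg t)\lg t\lg\lg n/\lg^2 n)=o(n)$, after concatenating the sequences to keep the $o()$ asymptotic in $n$ even when most nodes hold tiny sequences.

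For the queries I would implement the three operations as standard wavelet-tree traversals: $S[i]$ descends from the root, at each node reading $S_v[i]$ to choose the child and mapping $i$ by an $\StrRank$ call; $\StrRank_c(S,i)$ descends along the codeword of $c$ (located via the leaf pointer), updating $i$ by $\StrRank$ at each node; $\StrSelect_c(S,i)$ starts at the leaf of $c$ (reached via its pointer) and climbs to the root using parent pointers and $\StrSelect$ at each step. Each node operation is $\bigo(1)$ on the Golynski representation, so the cost of a query is proportional to the depth of $c$'s leaf, which is $\bigo(1+\lg\sigma/\lg\lg n)$ in the worst case after applying the same depth-restricted Huffman construction used in Section~\ref{sec:multiary}, and whose average (over uniform $i$ for $S[i]$, or over $c$ drawn with probability $n_c/n$ for $\StrRank_c$ and $\StrSelect_c$) is the weighted average codeword length of a $t$-ary Huffman code, bounded by $1+\entropy_0(S)/\lg t=\bigo(1+\entropy_0(S)/\lg\lg n)$.

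The step I expect to require most care is the $\StrSelect_c$ and $\StrRank_c$ average-time claim: unlike $S[i]$, where uniform $i$ directly yields the frequency-weighted expected depth, here the averaging is over the \emph{query symbol} $c$ with weight $n_c/n$, and I need to check that the Huffman bound $\sum_c (n_c/n)\,\mathit{depth}(c)<1+\entropy_0(S)/\lg t$ is the right quantity and that the height restriction used to bound the worst case does not break it (the same argument used in Section~\ref{sec:multiary} applies, because the extra depth added by rebalancing amortizes to $o(1/\lg t)$ per leaf). Everything else is a direct transcription of Theorem~\ref{thm:main2} from run-lengths to character-frequencies.
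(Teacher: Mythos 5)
Your proposal is correct and is essentially the paper's own proof: both build the $t$-ary Huffman-shaped wavelet tree of Theorem~\ref{thm:main2} with $t=\sqrt{\lg n}$, substituting character frequencies $n_c$ for run lengths, reuse the space telescoping and redundancy bounds from Section~\ref{sec:multiary}, and argue the average time from the Huffman codeword-length bound weighted by $n_c/n$. The paper's proof is just more terse, deferring to the earlier analysis where you spell the steps out.
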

\begin{proof}
We build exactly the same $t$-ary Huffman tree used in Theorem~\ref{thm:main2},
using the frequencies $n_c$ instead of run lengths. The sequences at each
internal node are formed so as to indicate how the symbols in the child nodes
are interleaved in $S$. This is precisely a multiary Huffman-shaped wavelet
tree \cite{GGV03,FMMN07}, and our previous analysis shows that the space used
by the tree is exactly as in Theorem~\ref{thm:main2}, where now the entropy is 
$\entropy_0(S) = \sum_c \frac{n_c}{n} \lg\frac{n}{n_c}$. The three queries are 
solved by going down or up the tree and using $\StrRank$ and $\StrSelect$ on 
the sequences stored at the nodes \cite{GGV03,FMMN07}. Under the conditions
stated for the average case, one arrives at the leaf of symbol $c$ with
probability $n_c/n$, and then the average case complexities follow.
\end{proof}

\section{Strict Runs}
\label{sec:stricter}

Some classes of permutations can be covered by a small number of
runs of a stricter type.
We present an encoding scheme that take advantage of them.

\begin{definition}
  A \emph{strict ascending run} in a permutation $\pi$ is a maximal
  range of positions satisfying $\pi(i+k)=\pi(i)+k$. The {\em head} of
  such run is its first position.
  The number of strict ascending runs of $\pi$ is noted $\nSRuns$,
  and the sequence of the lengths of the strict ascending runs is
  noted $\vSRuns$. 
  We will call $\vHRuns$ the sequence of contiguous monotone
  run lengths of the sequence
  formed by the strict run heads of $\pi$.
Similarly, the notion of a \emph{strict descending run} can be defined,
as well as that of \emph{strict (monotone) run} encompassing both.
\end{definition}

For example, the permutation
$(\mathit{6},\mathit{7},\mathit{8},\mathit{9},\mathit{10},\mathbf{1},\mathbf{2},\mathbf{3},\mathbf{4},\mathbf{5})$
contains $\nSRuns=2$ strict runs, of lengths $\vSRuns = \langle 5,5
\rangle$.
The run heads are $\langle \mathit{6}, \mathbf{1} \rangle$,
which form 1 monotone run, of lengths $\vHRuns = \langle 2
\rangle$. 
Instead, the permutation
$(\mathit{1},\mathit{3},\mathit{5},\mathit{7},\mathit{9},\mathbf{2},\mathbf{4},\mathbf{6},\mathbf{8},\mathbf{10})$
contains $\nSRuns=10$ strict runs, each of length 1.

\begin{theorem}
\label{thm:strict}
Assume there is an encoding $P$ for a permutation over $[1..n]$ with
$\nRuns$ contiguous monotone runs of lengths forming the vector  $\vRuns$, which requires
$s(n,\nRuns,\vRuns)$ bits of space and can apply the permutation and
its inverse in time $t(n,\nRuns,\vRuns)$.
Now consider a permutation $\pi$ over $[1..n]$ covered by $\nSRuns$
strict runs and by $\nRuns\le\nSRuns$ monotone runs, and let $\vHRuns$ be the
vector formed by the $\nRuns$ monotone run lengths in the permutation of strict
run heads.
Then there is an encoding scheme using at most 
  $s(\nSRuns,\nRuns,\vHRuns) + \bigo(\nSRuns\lg\frac{n}{\nSRuns})
  + o(n)$ bits  for $\pi$.
  It can be computed in $\bigo(n)$ time on top of that for building $P$.
  It supports the computation of $\pi(i)$ and $\pi^{-1}(i)$ in time
  $\bigo(t(\nSRuns,\nRuns,\vHRuns))$ for any value $i\in[1..n]$.
\end{theorem}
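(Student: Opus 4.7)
The plan is to contract $\pi$ to a smaller permutation $\pi_H$ of length $\nSRuns$ obtained by collapsing each strict run into a single element, apply the hypothesized encoding $P$ to $\pi_H$, and attach two small bitmaps that translate between positions and values of $\pi$ and those of $\pi_H$.

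More precisely, let $p_1 < \cdots < p_{\nSRuns}$ be the positions of the strict-run heads of $\pi$, and $h_r = \pi(p_r)$ their values; define $\pi_H$ as the rank reduction of $\langle h_1,\ldots,h_{\nSRuns}\rangle$ to a permutation on $[1..\nSRuns]$. I would first verify the structural claim that $\pi_H$ has exactly $\nRuns$ contiguous monotone runs with lengths $\vHRuns$: since rank reduction strictly preserves order, the monotone runs of the head sequence map one-to-one onto those of $\pi_H$ while preserving their lengths. Invoking the hypothesis then encodes $\pi_H$ within $s(\nSRuns,\nRuns,\vHRuns)$ bits and supports $\pi_H(r)$ and $\pi_H^{-1}(s)$ in $\bigo(t(\nSRuns,\nRuns,\vHRuns))$ time.

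To bridge $\pi$ and $\pi_H$, I would store two bitmaps $D,R$ of length $n$ with $\nSRuns$ bits set each: $D$ marks the domain starting positions $p_r$, while $R$ marks, in the range, the smallest value of each strict run (a strict ascending run of length $\ell$ headed by $h$ covers values $[h,h+\ell-1]$, and symmetrically for descending). Using a compressed bitmap representation~\cite{GHSV06,GGGRR07} supporting $\BinRank$ and $\BinSelect$ in constant time, each of $D$ and $R$ fits in $\nSRuns\lg(n/\nSRuns) + \bigo(\nSRuns) + o(n)$ bits, matching the target redundancy. An additional bitmap of $\nSRuns$ bits records the direction (ascending/descending) of each strict run, and is absorbed into the same term. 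Queries then proceed as follows. For $\pi(i)$: compute $r = \BinRank_1(D,i)$ and offset $o = i - \BinSelect_1(D,r)$; obtain the target block $s = \pi_H(r)$ via $P$; recover its base value $v_0 = \BinSelect_1(R,s)$ and its length $\ell = \BinSelect_1(R,s+1) - v_0$ (using a sentinel at $n+1$ when $s=\nSRuns$); return $v_0 + o$ if run $r$ is ascending, else $v_0 + \ell - 1 - o$. The inverse $\pi^{-1}(j)$ is symmetric, calling $\pi_H^{-1}$ instead. Each query uses $\bigo(1)$ bitmap operations plus a single call to $P$, giving $\bigo(t(\nSRuns,\nRuns,\vHRuns))$ time overall.

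Construction takes $\bigo(n)$ extra time on top of building $P$: one linear scan identifies the strict runs, fills $D$, $R$ and the direction bitmap, and extracts the head sequence, which can then be rank-reduced in $\bigo(\nSRuns)$ time using $\BinRank_1(R,\cdot)$ to obtain $\pi_H$. The main obstacle I anticipate is the bookkeeping of direction information, since strict runs can be individually ascending or descending and this distinction is lost on passing to $\pi_H$; adding one bit per strict run resolves this within the claimed space budget, after which only routine verification of the query formulas and the additive redundancy accounting remains.
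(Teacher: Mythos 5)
Your proposal is correct and takes essentially the same approach as the paper's proof: collapse each strict run to a single element to get a permutation over $[1..\nSRuns]$, encode it with $P$, and maintain two compressed bitmaps (one over positions, one over values) to translate queries on $\pi$ into queries on the contracted permutation plus an offset. One point worth flagging: the paper's value-side bitmap $R^{inv}$ marks \emph{head} values, and its query formula $\BinSelect_1(R^{inv},j')+i-\BinSelect_1(R,i')$ is written for the strict ascending case only; your choice to mark the \emph{minimum} value of each strict run's value interval together with a per-run direction bit handles strict descending runs correctly, and the key observation you need for this (that the rank of a strict run's minimum value among all minima equals the rank of its head value among all heads, because the value intervals of strict runs tile $[1..n]$) does hold. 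This is a minor but welcome tightening of the argument; everything else --- the rank-reduction claim that $\pi_H$ has $\nRuns$ monotone runs of lengths $\vHRuns$, the $\bigo(\nSRuns\lg\frac{n}{\nSRuns})+o(n)$ space for the bitmaps via~\cite{GGGRR07}, and the constant-time overhead per query --- matches the paper.
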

\begin{proof}
We first set up a bitmap $R$ of length $n$ marking with a 1 bit the beginning 
of the strict runs. 
We set up a second bitmap $R^{inv}$ such that 
$R^{inv}[i] = R[\pi^{-1}(i)]$.
Now we create a new permutation $\pi'$ over $[1..\nSRuns]$ which collapses the 
strict runs of $\pi$, $\pi'(i) = \BinRank_1(R^{inv},\pi(\BinSelect_1(R,i)))$.
All this takes $\bigo(n)$ time and the bitmaps take 
$2\nSRuns\lg\frac{n}{\nSRuns} + \bigo(\nSRuns) + o(n)$ bits in compressed
form \cite{GGGRR07}, where $\BinRank$ and $\BinSelect$ are supported in 
constant time.

Now we build the structure $P$ for $\pi'$. The number of monotone runs in
$\pi$ is the same as for the sequence of strict run heads in $\pi$,
and in turn the same as the runs in $\pi'$. So the number of
runs in $\pi'$ is also $\nRuns$ and their lengths are $\vHRuns$.
Thus we require $s(\nSRuns,\nRuns,\vHRuns)$ further bits.

To compute $\pi(i)$, we find $i' \leftarrow \BinRank_1(R,i)$ and then compute 
$j' \leftarrow \pi'(i')$. The final answer is 
$\BinSelect_1(R^{inv},j') + i-\BinSelect_1(R,i')$.
To compute $\pi^{-1}(j)$, we find $j' \leftarrow \BinRank_1(R^{inv},j)$ and then
compute $i' \leftarrow (\pi')^{-1}(j')$. The final answer is 
$\BinSelect_1(R,i') + j-\BinSelect_1(R^{inv},j')$. 
The structure requires only constant time on top of that to support
the operator $\pi'()$ and its inverse $\pi'^{-1}()$ .
\end{proof}

The theorem can be combined with previous results, for example
Theorem~\ref{thm:main2}, in order to obtain
concrete data structures.  This representation is interesting because
its space could be much less than $n$ if $\nSRuns$ is small enough. However,
it still retains an $o(n)$ term that can be dominant.
The following corollary describes a compressed data structure where
the $o(n)$ term is significantly reduced.

\begin{corollary}
  The $o(n)$ term in the space of Theorem~\ref{thm:strict} can be replaced by 
  $\bigo(\nSRuns\llg\frac{n}{\nSRuns} + \lg n)$ at the cost of
  $\bigo(1+\lg \nSRuns)$ extra time for the queries.
\label{cor:bsgap}
\end{corollary}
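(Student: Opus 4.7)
The plan is to revisit the proof of Theorem~\ref{thm:strict} and change only the representation used for the two bitmaps $R$ and $R^{inv}$, leaving the inner structure $P$ for $\pi'$ (and everything downstream of it) completely untouched. The $o(n)$ term in Theorem~\ref{thm:strict} comes solely from the Raman-style compressed bitmap representation \cite{GGGRR07}, whose redundancy is $o(n)$ but independent of the number of 1s. Substituting the Gupta et al.\ representation \cite{GHSV06} for these two bitmaps trades that redundancy for one that scales with $\nSRuns$ instead of $n$, at the price of slower $\BinRank$ and $\BinSelect$.

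Concretely, each of $R$ and $R^{inv}$ has length $n$ and contains exactly $\nSRuns$ ones. The representation of \cite{GHSV06} stores such a bitmap in $\nSRuns\lg\frac{n}{\nSRuns} + \bigo(\nSRuns\llg\frac{n}{\nSRuns}+\lg n)$ bits. The leading $\nSRuns\lg\frac{n}{\nSRuns}$ contribution from both bitmaps is already absorbed into the $\bigo(\nSRuns\lg\frac{n}{\nSRuns})$ term present in the statement of Theorem~\ref{thm:strict}; the remaining second-order term replaces the previous $o(n)$ by $\bigo(\nSRuns\llg\frac{n}{\nSRuns}+\lg n)$, which is precisely the space bound claimed by the corollary.

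For the query cost, I would inspect the formulas used in the proof of Theorem~\ref{thm:strict} and observe that each evaluation of $\pi(i)$ or $\pi^{-1}(j)$ invokes only a constant number of $\BinRank$/$\BinSelect$ calls on $R$ and $R^{inv}$, plus one call to the underlying structure $P$ for $\pi'$. With the Gupta et al.\ bitmap each such operation now costs $\bigo(\lg\nSRuns)$ via binary search instead of $\bigo(1)$, so the total query time becomes $\bigo(t(\nSRuns,\nRuns,\vHRuns)+\lg\nSRuns)$, matching the $\bigo(1+\lg\nSRuns)$ additive overhead claimed. Construction time stays linear because the Gupta et al.\ bitmaps can be built in $\bigo(n)$ time from the same input used in Theorem~\ref{thm:strict}, so no additional cost appears.

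I do not foresee a real obstacle: the argument is essentially a bookkeeping exercise once the new bitmap representation is plugged in. The only subtlety worth checking is that the extra $\bigo(\lg n)$ additive term in the Gupta et al.\ space bound (which dominates when $\nSRuns$ is very small) is indeed the right shape for the corollary's statement, and that no hidden use of $\BinRank$/$\BinSelect$ on $R$ or $R^{inv}$ is buried inside $P$ itself---but by design $P$ operates on $\pi'$ of length $\nSRuns$ and never touches $R$ or $R^{inv}$, so the analysis factorizes cleanly.
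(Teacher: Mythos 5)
Your proposal matches the paper's proof exactly: both replace the Golynski et al.\ compressed bitmap representation of $R$ and $R^{inv}$ with the binary-searchable gap encoding of Gupta et al.\ \cite{GHSV06}, trading the $o(n)$ redundancy for $\bigo(\nSRuns\llg\frac{n}{\nSRuns}+\lg n)$ bits at the cost of $\bigo(1+\lg\nSRuns)$ time per $\BinRank$/$\BinSelect$. Your extra bookkeeping (constant number of bitmap operations per query, no hidden calls inside $P$) is a correct and slightly more explicit elaboration of what the paper leaves implicit.
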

\begin{proof}
Replace the structure of Golynski et al.~\cite{GGGRR07} by the binary searchable gap 
encoding of Gupta et al.~\cite{GHSV06}, which takes $\bigo(1+\lg \nSRuns)$ time 
for $\BinRank$ and $\BinSelect$ (recall Section~\ref{sec:sequences}).
\end{proof}

Other tradeoffs for the bitmap encodings are possible, such as the one
described by Gupta~\cite[Theorem~18 p.~155]{Gup07}.

\section{Shuffled Sequences}
\label{sec:shuffled-upsequences}

Up to now our runs have been contiguous in $\pi$.
Levcopoulos and Petersson~\cite{sortingShuffledMonotoneSequences}
introduced the more sophisticated concept of partitions formed by
interleaved runs, such as \emph{Shuffled UpSequences} (SUS)
and \emph{Shuffled Monotone Sequences} (SMS).
We now show how to take advantage of permutations
formed by shuffling (interleaving) a small number of runs.

\begin{definition}
  A decomposition of a permutation $\pi$ over $[1..n]$ into
  \emph{Shuffled UpSequences} is a set of, not necessarily consecutive,
  subsequences of increasing numbers that have to be removed from $\pi$ 
  in order to reduce it to the empty sequence.
  The number of shuffled upsequences in such a decomposition of $\pi$
  is noted $\nSUS$, and the vector formed by the lengths of the
  involved shuffled upsequences, in arbitrary order, is noted $\vSUS$.
  When the subsequences can be of increasing or decreasing numbers, we
  call them \emph{Shuffled Monotone Sequences}, call $\nSMS$ their
  number and $\vSMS$ the vector formed by their lengths.
\end{definition}

For example, the permutation
$(\mathit{1},\mathbf{6},\mathit{2},\mathbf{7},\mathit{3},\mathbf{8},\mathit{4},\mathbf{9},\mathit{5},\mathbf{10})$
contains $\nSUS=2$ shuffled upsequences of lengths forming the vector
$\vSUS=\langle 5,5\rangle$, but $\nRuns=5$ runs, all of
length 2.
Interestingly, we can reduce the problem of representing shuffled sequences to 
that of representing strings and contiguous runs.

\subsection{Reduction to Strings and Contiguous Monotone Sequences}

We first show how a permutation with a small number of shuffled monotone
sequences can be represented using strings over a small alphabet and 
permutations with a small number of contiguous monotone sequences.

\begin{theorem} \label{thm:gralSMS} 
  Assume there exists an encoding $P$ for a permutation over $[1..n]$ with 
  $\nRuns$ contiguous monotone runs of lengths forming the vector $\vRuns$, which 
  requires $s(n,\nRuns,\vRuns)$ bits of space and
  supports the application of the permutation and its inverse in time
  $t(n,\nRuns,\vRuns)$.
  Assume also that there is a data structure $S$ for a string $S[1..n]$
  over an alphabet of size $\nSMS$ with symbol frequencies $\vSMS$,
  using $s'(n,\nSMS,\vSMS)$ bits of space and
  supporting operators $\StrRank$, $\StrSelect$, and access to values
  $S[i]$, in time $t'(n,\nSMS,\vSMS)$.
  Now consider a permutation $\pi$ over $[1..n]$ covered by $\nSMS$
  shuffled monotone sequences of lengths $\vSMS$.
  Then there exists an encoding of $\pi$ using at most
  $s(n,\nSMS,\vSMS)+s'(n,\nSMS,\vSMS) +
  \bigo(\nSMS\lg\frac{n}{\nSMS})+o(n)$ bits.
  Given the covering into SMSs, the encoding can be built in time
  $\bigo(n)$, in addition to that of building $P$ and $S$.
  It supports the computation of $\pi(i)$ and $\pi^{-1}(i)$ in time
  $t(n,\nSMS,\vSMS)+t'(n,\nSMS,\vSMS)$ for any value of $i\in[1..n]$.
  The result is also valid for shuffled upsequences, in which case $P$
  is just required to handle ascending runs.
\end{theorem}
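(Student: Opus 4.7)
The plan is a direct reduction to the two black-box structures $P$ and $S$. I will label the positions of $\pi$ by the index of the shuffled monotone sequence they belong to, which defines a string $S[1..n]$ over alphabet $[1..\nSMS]$ whose symbol frequencies are exactly $\vSMS$; this is stored in $s'(n,\nSMS,\vSMS)$ bits. I then build an ``unshuffled'' permutation $\pi'$ over $[1..n]$ by concatenating, in order of the sequence indices, the images under $\pi$ of the positions of each sequence, listed in increasing order of position. Because each block is the image by $\pi$ of a set of original positions that form a monotone subsequence of $\pi$ by definition of SMS, the block is itself a contiguous monotone run of $\pi'$; hence $\pi'$ has exactly $\nSMS$ contiguous monotone runs of lengths $\vSMS$ and is stored in $s(n,\nSMS,\vSMS)$ bits via $P$. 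Finally I add a bitmap $B[1..n]$ with $\nSMS$ bits set, marking the starting position of each block in $\pi'$; using the compressed representation cited in Section~\ref{sec:sequences}, this takes $\bigo(\nSMS\lg(n/\nSMS)) + o(n)$ bits with constant-time $\BinRank$ and $\BinSelect$. Summing the three terms yields the claimed space.

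To compute $\pi(i)$, I first extract $c \leftarrow S[i]$ and $r \leftarrow \StrRank_c(S,i)$, which identifies the sequence containing $i$ and the local rank of $i$ within it; I then locate the start of the $c$-th block of $\pi'$ as $\BinSelect_1(B,c)$ and return $\pi'(\BinSelect_1(B,c) + r - 1)$. To compute $\pi^{-1}(j)$, I apply $(\pi')^{-1}(j)$ to obtain a position $p$ in $\pi'$, identify its block by $c \leftarrow \BinRank_1(B,p)$ and its offset by $r \leftarrow p - \BinSelect_1(B,c) + 1$, and return $\StrSelect_c(S,r)$. Each query uses one call to $P$ plus one call to $S$ plus constant-time bitmap operations, giving the claimed time $t(n,\nSMS,\vSMS) + t'(n,\nSMS,\vSMS)$. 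Given the covering, $S$, $\pi'$, and $B$ are each built in a single linear scan of $\pi$ (for $\pi'$, either by $\nSMS$ parallel cursors or by first bucketing the positions by sequence index), so the extra construction cost is $\bigo(n)$ on top of building $P$ and $S$. For the SUS variant, every sequence is ascending and therefore every block of $\pi'$ is an ascending run, so $P$ is only required to handle contiguous ascending runs.

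The one point really needing verification is the monotonicity of the blocks of $\pi'$: if $i_1 < i_2 < \cdots < i_m$ are the positions of a single shuffled monotone sequence in $\pi$, then by definition $\pi(i_1), \pi(i_2), \ldots, \pi(i_m)$ is monotone, and laying these $m$ values out contiguously in $\pi'$ produces exactly one monotone run. This is essentially the only nontrivial invariant; once it is stated, the correctness of the query algorithms is a routine unwinding of the definitions of $S$, $\pi'$, and $B$, and the space, time, and construction bounds follow by plugging into the stated complexities of $P$ and $S$ and the known redundancy of the bitmap encoding.
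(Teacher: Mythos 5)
Your proposal is correct and follows essentially the same route as the paper: label positions by sequence index to form the string $S$, reorder $\pi$ into $\pi'$ with $\nSMS$ contiguous monotone runs, and use a sparse bitmap over $[1..n]$ to translate between local offsets and global positions. The only cosmetic difference is that the paper first describes an explicit accumulated-length array $A$ and then replaces it by the bitmap, whereas you introduce the bitmap directly.
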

\begin{proof}
  Given the partition of $\pi$ into $\nSMS$ monotone subsequences, we
  create a string $S[1..n]$ over alphabet $[1..\nSMS]$ that indicates,
  for each element of $\pi$, the label of the monotone sequence it
  belongs to.
  We encode $S[1..n]$ using the data structure $S$.
  We also store an array $A[1..\nSMS]$ so that $A[\ell]$ is the
  accumulated length of all the sequences with label less than $\ell$.
  
  Now consider the permutation $\pi'$ formed by the sequences taken in
  label order: $\pi'$ can be covered with $\nSMS$ contiguous
  monotone runs $\vSMS$, and hence can be encoded using
  $s(n,\nSMS,\vSMS)$ additional bits using $P$.
  This supports the operators $\pi'()$ and $\pi'^{-1}()$ in time
  $t(n,\nSMS,\vSMS)$
 (again, some of the runs
  could be merged in $\pi'$, which only improves time and space in $P$).  
  Thus $\pi(i) = \pi'(A[S[i]]+\StrRank_{S[i]}(S,i))$ can be computed
  in time $t(n,\nSMS,\vSMS)+t'(n,\nSMS,\vSMS)$. Similarly,
  $\pi^{-1}(i) = \StrSelect_\ell(S,(\pi')^{-1}(i)-A[\ell])$, where
  $\ell$ is such that $A[\ell] < (\pi')^{-1}(i) \le A[\ell+1]$, can
  also be computed in time $t(n,\nSMS,\vSMS)+t'(n,\nSMS,\vSMS)$, plus
  the time to find $\ell$.
  The latter is reduced to constant by representing $A$ with a bitmap
  $A'[1..n]$ with the bits set at the values $A[\ell]+1$, so that
  $A[\ell] = \BinSelect_1(A',\ell)-1$, and the binary search is
  replaced by $\ell = \BinRank_1(A',(\pi')^{-1}(i))$. With the structure
  of Golynski et al.~\cite{GGGRR07}, $A'$ uses
  $\bigo(\nSMS\lg \frac{n}{\nSMS})+o(n)$ bits and operates in constant
  time.
\end{proof}

We will now obtain concrete results by using specific representations for
$P$ and $S$, and specific methods to find the decomposition into shuffled
sequences.

\subsection{Shuffled UpSequences}

Given an arbitrary permutation, one can decompose it in linear time
into contiguous runs in order to minimize $\entropy(\vRuns)$, where
$\vRuns$ is the vector of run lengths.
However, decomposing the same permutation into shuffled up 
(resp. monotone) sequences so as to minimize either $\nSUS$ or 
$\entropy(\vSUS)$ (resp. $\nSMS$ or $\entropy(\vSMS)$) is
computationally harder.

Fredman~\cite{onComputingTheLengthOfLongestIncreasingSubsequences}
gave an algorithm to compute a partition of minimum size $\nSUS$, into
upsequences, claiming a worst case complexity of $\bigo(n\lg n)$.
Even though he did not claim it at the time, it is easy to observe
that his algorithm is adaptive in $\nSUS$ and takes
$\bigo(n(1+\lg\nSUS))$ time.
We give here an improvement of his algorithm that computes the
partition itself within time $\bigo(n(1+\entropy(\vSUS)))$, no worse
than the time of his original algorithm, as
$\entropy(\vSUS) \le \lg \nSUS$.

\begin{theorem}
  \label{thm:partitionInSUS}
  If an array $D[1..n]$ can be optimally covered by $\nSUS$ shuffled
  upsequences (equal values do not break an upsequence), then there is
  an algorithm finding a covering of size $\nSUS$ in time
  $\bigo(n(1+\entropy(\vSUS)))\subset\bigo(n(1+\lg\nSUS))$, where
  $\vSUS$ is the vector formed by the lengths of the upsequences
  found.
\end{theorem}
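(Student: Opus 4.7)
The plan is to reimplement Fredman's patience-sorting algorithm so that its running time adapts to $\entropy(\vSUS)$ rather than to $\lg\nSUS$. Fredman's algorithm processes $D[1..n]$ left to right, maintaining a set of piles---each an upsequence under construction---represented by their current top element. For each $D[i]$ it greedily finds the pile to extend (or opens a new pile if none can absorb $D[i]$) and updates its top to $D[i]$; a classical exchange argument shows that the final number of piles is $\nSUS$ and each pile read bottom-up is a valid upsequence, so the covering is optimal. I would keep this outer loop intact and focus the refinement on the data structure that locates the right pile for each input element.

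The idea is to replace Fredman's sorted-array-with-binary-search (which gives $O(n(1+\lg\nSUS))$) by a self-adjusting structure on the pile tops, for instance a splay tree keyed by the top values. Since overwriting the top of a pile preserves its relative position among the other tops (the same monotonicity used in the correctness of the greedy step), each input element triggers exactly one splay-tree operation: a predecessor/successor search that incidentally splays the accessed node to the root, followed by a key overwrite; opening a new pile corresponds to inserting a rightmost node. The accounting then rests on the observation that, over the entire execution, pile $j$ is touched exactly $n_j$ times, where $\vSUS=\langle n_1,\ldots,n_\nSUS\rangle$. Applying the static-optimality theorem of Sleator and Tarjan bounds the total amortized cost of these $n$ accesses by $O\bigl(n+\sum_j n_j \lg(n/n_j)\bigr)=O(n(1+\entropy(\vSUS)))$; the $\nSUS-1$ insertions contribute at most $O(\nSUS\lg\nSUS)$, which by concavity (Definition~\ref{def:entrop}) is absorbed in the same bound.

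The main difficulty I anticipate is invoking static optimality cleanly when the key set grows during the run. A robust workaround is to use a biased search tree (e.g.\ Bent--Sleator--Tarjan) where each pile carries a weight proportional to its current size, doubled each time the pile is extended; a direct potential argument then gives amortized access cost $O(1+\lg(n/n_j))$ per touch of pile $j$, summing to the same target. Once the bound is in hand, the containment $\entropy(\vSUS)\le\lg\nSUS$ shows that the running time is never worse than Fredman's $O(n(1+\lg\nSUS))$, giving the full inclusion claimed in the statement. The partition itself is returned by labelling each $D[i]$ on the fly with the index of the pile it is placed on.
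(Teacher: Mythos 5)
Your proposal matches the paper's proof essentially step for step: keep Fredman's greedy patience-sorting loop, replace his sorted array with a splay tree keyed on pile tops, observe that extending a pile never changes the relative order of the tops (so only searches and insertions occur), and invoke Sleator--Tarjan static optimality with access frequencies $n_j$ to bound the total time by $O(n(1+\entropy(\vSUS)))$. The paper simply cites \cite[Theorem~2]{ST85} for that last step, while you additionally flag the subtlety of invoking static optimality on a growing key set and offer a biased-tree fallback; that is a reasonable extra precaution rather than a divergence. One small slip: when no existing pile can absorb $D[i]$, the new pile's top is smaller than all current tops, so it is inserted as the \emph{leftmost}, not rightmost, node (this does not affect the time analysis).
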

\begin{proof}
  Initialize a sequence $S_1=(D[1])$, and a splay tree $T$ \cite{ST85}
  with the node $(S_1)$, ordered by the rightmost value of
  the sequence contained by each node.
  For each further array element $D[i]$, search for the sequence with the
  maximum ending point no larger than $D[i]$.
  If it exists, add $D[i]$ to this sequence, otherwise create a new
  sequence and add it to $T$.

  Fredman~\cite{onComputingTheLengthOfLongestIncreasingSubsequences}
  already proved that this algorithm finds a partition of minimum size $\nSUS$.
  Note that, although the rightmost values of the splay tree nodes
  change when we insert a new element in their sequence, their
  relative position with respect to the other nodes remains the same,
  since all the nodes at the right hold larger values than the one
  inserted.
  This implies in particular that only searches and insertions are
  performed in the splay tree.
  
  A simple analysis, valid for both the plain sorted array in
  Fredman's proof and the splay tree of our own proof, yields an
  adaptive complexity of $\bigo(n(1+\lg\nSUS))$ comparisons, since
  both structures contain at most $\nSUS$ elements at any time.
  The additional linear term (relevant when $\nSUS=1$)
   corresponds to the cost of reading each element once.

  The analysis of the algorithm using the splay tree refines the
  complexity to $\bigo(n(1+\entropy(\vSUS)))$, where
  $\vSUS$ is the vector formed by the lengths of the upsequences
  found. 
  These lengths correspond to the frequencies of access to each
  node of the splay tree, which yields the total access time of
  $\bigo(n(1+\entropy(\vSUS)))$~\cite[Theorem~2]{ST85}.
\end{proof}

The theorem obviously applies to the particular case where the array is a
permutation. For permutations and, in general, integer arrays over a universe
$[1..m]$, we can deviate from the comparison model and find the partition
within time $\bigo(n\llg m)$, by using $y$-fast tries \cite{Wil83} instead
of splay trees.

We can now give a concrete representation for shuffled upsequences.
The complete description of the permutation requires to encode the
computation the partitioning and of the comparisons performed by the
sorting algorithm. This time the encoding cost of partitioning is
as important as that of merging.

\begin{theorem} \label{thm:SUS} Let $\pi$ be a permutation over
  $[1..n]$ that can be optimally covered by $\nSUS$ shuffled
  upsequences, and let $\vSUS$ be the vector formed by the lengths of the
  decomposition found by the algorithm of
  Theorem~\ref{thm:partitionInSUS}.
Then there is an encoding scheme for $\pi$ using at most 
$2n\entropy(\vSUS)+ \bigo(\nSUS\lg n) + o(n)$ bits.
It can be computed in time 
$\bigo(n(1+\entropy(\vSUS)))$, and
supports the computation of $\pi(i)$ and $\pi^{-1}(i)$ in time
$\bigo(1+\lg\nSUS/\lg\lg n)$ for any value of $i\in[1..n]$.
  If $i$ is chosen uniformly at random in $[1..n]$ the average query time
  is $\bigo(1+\entropy(\vSUS)/\lg\lg n)$.
\end{theorem}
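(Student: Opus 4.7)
The plan is to derive Theorem~\ref{thm:SUS} as a direct instantiation of Theorem~\ref{thm:gralSMS}, plugging in Theorem~\ref{thm:main2} for $P$ and Theorem~\ref{thm:string} for $S$, and using Theorem~\ref{thm:partitionInSUS} to find the partition itself. First I would run the algorithm of Theorem~\ref{thm:partitionInSUS} to obtain a decomposition of $\pi$ into $\nSUS$ shuffled upsequences of lengths $\vSUS$, which takes $\bigo(n(1+\entropy(\vSUS)))$ time and dominates the overall construction cost.

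Next I would apply Theorem~\ref{thm:gralSMS} (its last sentence, which handles shuffled upsequences when $P$ supports only ascending runs). For $P$ I would take the data structure of Theorem~\ref{thm:main2} applied to the permutation $\pi'$ obtained by concatenating the upsequences in label order; since $\pi'$ has $\nSUS$ contiguous ascending runs of lengths $\vSUS$, this contributes $s(n,\nSUS,\vSUS)=n\entropy(\vSUS)+\bigo(\nSUS\lg n)+o(n)$ bits with operation time $\bigo(1+\lg\nSUS/\lg\lg n)$ in the worst case and $\bigo(1+\entropy(\vSUS)/\lg\lg n)$ on a uniformly random argument. For $S$ I would use Theorem~\ref{thm:string} applied to the label string $S[1..n]$ over alphabet $[1..\nSUS]$, whose symbol frequencies are precisely $\vSUS$, so $\entropy_0(S)=\entropy(\vSUS)$ and $s'(n,\nSUS,\vSUS)=n\entropy(\vSUS)+\bigo(\nSUS\lg n)+o(n)$ with the same time bounds. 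Adding the two gives the claimed $2n\entropy(\vSUS)+\bigo(\nSUS\lg n)+o(n)$ bits, with the $\bigo(\nSUS\lg\frac{n}{\nSUS})+o(n)$ overhead of Theorem~\ref{thm:gralSMS} absorbed in these terms. The worst-case query time follows by summing $t$ and $t'$.

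The subtle step will be the average-case analysis, which is what I expect to be the main (though still routine) obstacle. When $i$ is uniform in $[1..n]$, I would argue for $\pi(i)$ as follows: accessing $S[i]$ at a uniform position costs on average $\bigo(1+\entropy(\vSUS)/\lg\lg n)$ by Theorem~\ref{thm:string}; the subsequent $\StrRank_{S[i]}(S,i)$ and the call $\pi'(A[S[i]]+\StrRank_{S[i]}(S,i))$ are evaluated at an argument that, because the map $i \mapsto A[S[i]]+\StrRank_{S[i]}(S,i)$ is a bijection of $[1..n]$, is also uniform, so Theorem~\ref{thm:main2} again gives average cost $\bigo(1+\entropy(\vSUS)/\lg\lg n)$. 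For $\pi^{-1}(i)$, $(\pi')^{-1}(i)$ is evaluated at a uniform position; the resulting label $\ell$ is then distributed with probability $n_\ell/n$ (the frequency of $\ell$ in $S$), which is exactly the distribution required by the average-case clause of Theorem~\ref{thm:string} for $\StrSelect_\ell$.

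Finally I would note that the construction time $\bigo(n(1+\entropy(\vSUS)))$ of the partitioning step subsumes both the $\bigo(n(1+\entropy(\vSUS)/\lg\lg n))$ cost of building $P$ (Theorem~\ref{thm:main2}) and the linear cost of building $S$ and the auxiliary bitmap $A'$ (Theorem~\ref{thm:gralSMS}), so the total construction time matches the claim. This completes the proof plan; no new combinatorial or algorithmic idea is needed beyond the composition of earlier results.
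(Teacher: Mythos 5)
Your proof plan is correct and takes essentially the same route as the paper: run Theorem~\ref{thm:partitionInSUS} to obtain the SUS decomposition, then instantiate Theorem~\ref{thm:gralSMS} with $P$ from Theorem~\ref{thm:main2} and $S$ from Theorem~\ref{thm:string}, noting that $\entropy(\vSUS)$ equals both $\entropy_0(S)$ and $\entropy(\vRuns)$ for $\pi'$. You supply a more careful justification of the average-case bound (the bijection argument for uniformity of the argument to $\pi'$, and the $n_\ell/n$ distribution for $\StrSelect_\ell$), which the paper compresses into ``the result follows immediately,'' but this is a verification of the same composition, not a different proof.
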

\begin{proof}
  We first use Theorem~\ref{thm:partitionInSUS} to find the SUS
  partition of optimal size $\nSUS$, and the corresponding vector
  $\vSUS$ formed by the sizes of the subsequences of this partition.
  Then we apply Theorem~\ref{thm:gralSMS}: For the data structure $S$
  we use Theorem~\ref{thm:string}, whereas for $P$ we use
  Theorem~\ref{thm:main2}. Note $\entropy(\vSUS)$ is both $\entropy_0(S)$
  and $\entropy(\vRuns)$ for permutation $\pi'$. 
  The result follows immediately.
\end{proof}

One would be tempted to consider the case of a permutation $\pi$ covered by 
$\nSUS$ upsequences which form strict runs, as a particular case. Yet, this
is achieved by resorting directly to Theorem~\ref{thm:main2}.
The corollary extends verbatim to shuffled monotone sequences.

\begin{corollary}
  There is an encoding scheme 
  using at most $n\entropy(\vSUS) + \bigo(\nSUS\lg n) +o(n)$
  bits to 
  encode a permutation $\pi$ over $[1..n]$ optimally covered by
  $\nSUS$ shuffled upsequences,
  of lengths forming the vector $\vSUS$,
  and made up of strict runs.
  It can be built within time $\bigo(n(1+\entropy(\vSUS)/\lg\lg n))$, and
  supports the computation of $\pi(i)$ and $\pi^{-1}(i)$ in time
  $\bigo(1+\lg\nSUS /\lg\lg n)$ for any value of $i\in[1..n]$.
  If $i$ is chosen uniformly at random in $[1..n]$ then the average query time
  is $\bigo(1+\entropy(\vSUS)/\lg\lg n)$.
\label{cor:SUSstrict}
\end{corollary}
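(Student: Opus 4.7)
The plan is to observe that the hypothesis collapses the problem onto Theorem~\ref{thm:main2}, so no Shuffled-UpSequence machinery is needed at all. First I would unpack ``made up of strict runs'' in the sense suggested by the paragraph preceding the corollary (``upsequences which form strict runs''): each of the $\nSUS$ shuffled upsequences in the optimal decomposition is itself a strict ascending run of $\pi$. By the strict-run definition, such a run occupies a contiguous range of positions $[i..i+k-1]$ with $\pi(i+j)=\pi(i)+j$, so each upsequence is in particular a contiguous ascending block of $\pi$ whose value set is an interval of $[1..n]$.

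Second, since the $\nSUS$ upsequences partition $[1..n]$ and each is such a contiguous ascending block, $\pi$ is a permutation covered by exactly $\nSUS$ contiguous ascending runs whose lengths form the vector $\vSUS$. Under the identifications $\nRuns := \nSUS$ and $\vRuns := \vSUS$, the hypothesis of Theorem~\ref{thm:main2} is satisfied verbatim, and invoking it yields all four claims of the corollary at once: $n\entropy(\vSUS) + \bigo(\nSUS \lg n) + o(n)$ bits of space, construction time $\bigo(n(1 + \entropy(\vSUS)/\lg\lg n))$, worst-case time $\bigo(1 + \lg \nSUS / \lg\lg n)$ for $\pi(i)$ and $\pi^{-1}(i)$, and average time $\bigo(1 + \entropy(\vSUS)/\lg\lg n)$ when $i$ is drawn uniformly from $[1..n]$.

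There is no real algorithmic obstacle; the substantive point is merely to explain why the composite construction underlying Theorem~\ref{thm:SUS} can be bypassed here. In Theorem~\ref{thm:SUS}, the $2n\entropy(\vSUS)$ bound arose because Theorem~\ref{thm:gralSMS} kept both a label string $S$ (via Theorem~\ref{thm:string}, at cost $n\entropy(\vSUS)$) and an auxiliary permutation $\pi'$ with contiguous runs (via Theorem~\ref{thm:main2}, at cost $n\entropy(\vSUS)$). The strict-run hypothesis forces $S$ to be constant on each upsequence's contiguous position range, so $S$ is already fully determined by the run-boundary bitmap $C$ internal to Theorem~\ref{thm:main2}; dropping this redundant $S$-layer is exactly what halves the leading space term to the $n\entropy(\vSUS)$ asserted by the corollary. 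Finally, the extension to shuffled monotone sequences mentioned in the preceding paragraph follows by the same argument through Corollary~\ref{cor:main} instead of Theorem~\ref{thm:main2}, replacing ``ascending'' by ``monotone'' throughout.
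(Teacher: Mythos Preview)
Your reading of ``made up of strict runs'' is too restrictive, and this is where the argument breaks. You take each upsequence to be a strict run in the sense of Definition~4, i.e., a \emph{contiguous} block of positions $[i..i{+}k{-}1]$ with $\pi(i{+}j)=\pi(i)+j$. Under that reading the upsequences are not shuffled at all, and the corollary would indeed collapse to Theorem~\ref{thm:main2} applied to $\pi$ itself --- but then the corollary would be contentless. The paper intends the weaker hypothesis (made explicit in its proof): each upsequence sits at positions $i_0<i_1<\cdots<i_m$ that need \emph{not} be consecutive, while the \emph{values} are consecutive, $\pi(i_t)=\pi(i_0)+t$. In this case the upsequences genuinely interleave, and your step ``$\pi$ is covered by exactly $\nSUS$ contiguous ascending runs'' is false.

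A concrete counterexample: $\pi=(1,6,2,7,3,8,4,9,5,10)$ is covered by two such strict upsequences, $(1,2,3,4,5)$ at positions $1,3,5,7,9$ and $(6,7,8,9,10)$ at positions $2,4,6,8,10$, yet $\pi$ has $\nRuns=5$ contiguous ascending runs, not $2$. Applying Theorem~\ref{thm:main2} to $\pi$ gives bounds in terms of those $5$ runs, not in terms of $\nSUS=2$.

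The paper's fix is a one-line twist you are missing: represent $\pi^{-1}$ instead of $\pi$. If $\pi(i_t)=j_0+t$ with $i_0<\cdots<i_m$, then $\pi^{-1}(j_0{+}t)=i_t$ is increasing in $t$, so each strict upsequence of $\pi$ becomes a \emph{contiguous} ascending run of $\pi^{-1}$ (the positions $j_0,\ldots,j_0{+}m$ are consecutive by hypothesis). Hence $\pi^{-1}$ is covered by $\nSUS$ contiguous ascending runs of lengths $\vSUS$, and Theorem~\ref{thm:main2} applied to $\pi^{-1}$ yields all four claims (swap the roles of $\pi$ and $\pi^{-1}$ in the interface). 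Your final paragraph about dropping the label string $S$ is morally right in spirit, but the mechanism is this inversion, not a direct application to $\pi$.
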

\begin{proof}
  It is sufficient to invert $\pi$ and represent $\pi^{-1}$ using
  Theorem~\ref{thm:main2}, since in this case $\pi^{-1}$ is covered by
  $\nSUS$ ascending runs of lengths forming the vector $\vSUS$: If
  $i_0 < i_1 \ldots < i_m$ forms a strict upsequence, so that
  $\pi(i_t) = \pi(i_0)+t$, then calling $j_0 = \pi(i_0)$ we have the
  ascending run $\pi^{-1}(j_0+t) = i_t$ for~$0\le t\le m$.
\end{proof}

Once more, our construction translates into an improved sorting
algorithm, improving on the complexity $\bigo(n(1+\lg\nSUS))$ of
the algorithm by Levcopoulos and Petersson~\cite{sortingShuffledMonotoneSequences}.

\begin{corollary}
  \label{cor:sortSUS} 
  We can sort an array of length $n$, optimally covered by $\nSUS$
  shuffled upsequences, in time $\bigo(n(1+\entropy(\vSUS)))$, where
  $\vSUS$ are the lengths of the decomposition found by the algorithm
  of Theorem~\ref{thm:partitionInSUS}.
\end{corollary}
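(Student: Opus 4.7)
The plan is to combine the partitioning algorithm of Theorem~\ref{thm:partitionInSUS} with the adaptive merge-based sorter of Theorem~\ref{thm:mainsort}. First, I would run the algorithm of Theorem~\ref{thm:partitionInSUS} on the input array, obtaining within time $\bigo(n(1+\entropy(\vSUS)))$ a partition into $\nSUS$ upsequences whose length vector is $\vSUS$. As a by-product of that execution, each input position can be tagged in $\bigo(1)$ time with the label of the upsequence it was appended to, so the tagging adds only $\bigo(n)$ overall.

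Next, I would physically gather the upsequences into a single auxiliary array of length $n$, writing them one after another in label order. Using the labels together with a prefix-sum array of upsequence lengths (both computable in $\bigo(n+\nSUS)$ time), each element can be moved to its destination in constant time, for a total cost of $\bigo(n)$. The resulting array is a sequence of $\nSUS$ contiguous ascending runs whose length vector is exactly $\vSUS$.

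Finally, I would apply Theorem~\ref{thm:mainsort} to this auxiliary array. That theorem sorts an array covered by $\nSUS$ contiguous monotone runs of length vector $\vSUS$ in time $\bigo(n(1+\entropy(\vSUS)))$, which dominates all the preceding steps and yields the claimed bound. Summing the three phases gives $\bigo(n(1+\entropy(\vSUS)))$.

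I do not expect a real obstacle here, since both building blocks are already in place; the only delicate point is making sure that the physical reshuffling into contiguous runs does not exceed $\bigo(n)$ time and does not disturb the entropy bound. This is handled by the label-based bucket placement above, and the entropy of the run-length vector is preserved verbatim because the upsequences are simply reordered, not split or merged.
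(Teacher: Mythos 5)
Your proposal is correct and matches the paper's approach: the paper's one-line proof observes that the construction of Theorem~\ref{thm:SUS} (which runs Theorem~\ref{thm:partitionInSUS} to find the partition, then the $\pi'$ rearrangement of Theorem~\ref{thm:gralSMS}, then the wavelet-tree merge underlying Theorems~\ref{thm:main}--\ref{thm:main2}, omitting the string $S$) already sorts within the stated time, which is exactly your three-phase pipeline. One small imprecision worth flagging: after you concatenate the upsequences in label order, two adjacent upsequences may fuse into a single contiguous ascending run, so the run-length vector of the auxiliary array need not be ``exactly $\vSUS$''; but this only decreases the entropy (Definition~\ref{def:entrop}) and hence the running time of Theorem~\ref{thm:mainsort}, so the claimed $\bigo(n(1+\entropy(\vSUS)))$ bound is unaffected.
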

\begin{proof}
  Our construction in Theorem~\ref{thm:SUS} finds and separates the
  subsequences of $\pi$, and sorts them, all within this
  time (we do not need to build the string $S$).
\end{proof}

\paragraph*{Open problem}
Note that the algorithm of Theorem~\ref{thm:partitionInSUS} finds
a partition of minimal size $\nSUS$ (this is what we refer to with
``optimally covered''), but that the entropy
$\entropy(\vSUS)$ of this partition is not necessarily minimal:
There could be another partition, even of size larger than $\nSUS$, with
lower entropy. Our results are only in function of the entropy of the
partition of minimal size $\nSUS$ found. This is unsatisfactory, as the 
ideal would be to speak in terms of the minimum possible
$\entropy(\vSUS)$, just as we could do for $\entropy(\vRuns)$.

An example, consider the
permutation $(1,2,\dots,n/2{-}1,n,n/2,n/2{+}1,\ldots,n{-}1)$, for some even
integer~$n$.
The algorithm of Theorem~\ref{thm:partitionInSUS} yields the partition
$\{(1,2,\dots,n/2{-}1,n),(n/2,n/2{+}1,\ldots,n{-}1)\}$ of entropy
$\entropy(\langle n/2,n/2\rangle)=n\lg 2=n$.
This is suboptimal, as the partition 
$\{(1,2,\dots,n/2{-}1,n/2,n/2{+}1, \ldots,$ $n{-}1),(n)\}$ is of much
smaller entropy, $\entropy(\langle
n{-}1,1\rangle)=(n-1)\lg\frac{n}{n-1} + \lg n =\bigo(\lg n)$.

On the other hand, a greedy online algorithm cannot minimize the
entropy of a SUS partitioning. As an example consider the
permutation $(2,3,\ldots,n/2,1,n,n/2{+}1,\dots,n{-}1)$, for some
even integer~$n$.
A greedy online algorithm that after processing a prefix of the
sequence minimizes the entropy of such prefix, produces the partition
$\{(1,n/2{+}1,\dots,n{-}1),(2,3,\ldots,n/2,n)\}$, of size $2$ and
entropy $\entropy(\langle n/2,n/2\rangle) = n$.
However, a much better partition is
$\{(1,n),(2,3,\dots,n{-}1)\}$, of size $2$ and entropy
$\entropy(\langle 2,n-2\rangle) = \bigo(\lg n)$.

We doubt that the SUS partition minimizing $\entropy(\vSUS)$ can be
found within time $\bigo(n(1+\entropy(\vSUS)))$ or even
$\bigo(n(1+\lg\nSUS))$. Proving this right or wrong is an open challenge.

\subsection{Shuffled Monotone Sequences}
\label{sec:shuffl-monot-sequ}

No efficient algorithm is known to compute the minimum number $\nSMS$
of shuffled monotone sequences composing a permutation, let alone
finding a partition minimizing the entropy $\entropy(\vSMS)$ of the
lengths of the subsequences. The problem is NP-hard, by reduction to
the computation of the ``cochromatic'' number of the graph
corresponding to the
permutation~\cite{partitioningPermutationsIntoIncreasingAndDecreasingSubsequences}.

Yet, should such a partition into monotone subsequences be available,
and be of smaller entropy than the partitions considered in the
previous sections, this would yield an improved encoding
by doing just as in Theorem~\ref{thm:SUS} for SUS.

Note that 
it takes a difference by a superpolynomial margin between the values of
$\nSUS$ and $\nSMS$ to yield a noticeable difference between $\lg\nSUS$
and $\lg\nSMS$, and hence between the values of
$\entropy(\vSUS)$ and $\entropy(\vSMS)$. It seems unlikely that such a
difference would justify the difference of computing time between the
two types of partitions, also different by a superpolynomial margin to
the best of current knowledge (i.e., if $P\neq NP$).

\section{Conclusions}
\label{sec:conclusion}

\paragraph*{Relation between space and time}
\label{sec:relation-space-time}

Bentley and Yao~\cite{anAlmostOptimalAlgorithmForUnboundedSearching}
introduced a family of search algorithms adaptive to the
position of the element sought (also known as the ``unbounded search''
problem) through the definition of a family of adaptive codes
for unbounded integers, hence proving that the link between algorithms
and encodings was not limited to the complexity lower bounds suggested
by information theory.
Such a relation between ``time'' and ``space'' can be found in other
contexts: algorithms to merge two sets define an encoding for
sets~\cite{mergeSourceCoding}, and the binary results of
the comparisons of any deterministic sorting algorithm in the
comparison model yields an encoding of the permutation being sorted.

We have shown that some concepts originally defined for adaptive
variants of the algorithm MergeSort, such as runs and
shuffled sequences, are useful in terms of the compression of
permutations, and conversely, that concepts originally defined for
data compression, such as the entropy of the sets of run lengths,
are a useful addition to the set of difficulty measures previously
considered in the study of adaptive sorting algorithms.

Much more work is required to explore the application to the
compression of permutations and strings of the many other measures of
preorder introduced in the study of adaptive sorting algorithms.
Figure~\ref{fig:partialOrder} represents graphically some of those
measures of presortedness (adding to those described by 
Moffat and Petersson~\cite{anOverviewOfAdaptiveSorting}, those described in this and 
other recent work \cite{lrmTrees}) and a preorder on them based on optimality
implication in terms of the number of comparison performed. This is
relevant for the space of the corresponding permutation encodings, and
for the space used by the potential corresponding compressed
data structures for permutations.
Note that the reductions in this graph do not represent reductions in
terms of optimality of the running time to find the partitions.
For instance, we saw that $\HSMS$-optimality implies
$\HSUS$-optimality in terms of the number of comparison performed, but
not in terms of the running time.
In terms of data structures, this relates to the construction time of
the compressed data structure (as opposed to the space it takes).

\begin{figure}
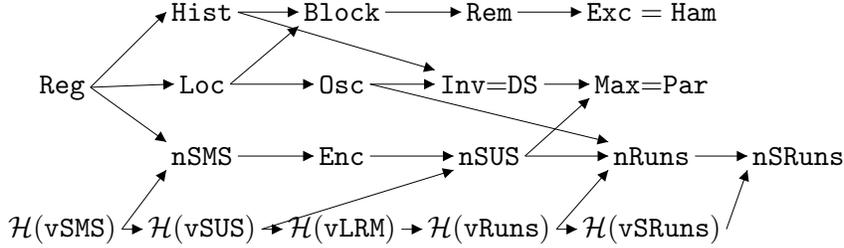

  \centering
  \renewcommand{\nBlock}{\idtt{Block}}
  \partialOrderOnPreorderMeasures
  \caption{Partial order on some measures of disorder for adaptive
    sorting. New results are on the bottom line.
  \label{fig:partialOrder}
}
\end{figure}

\paragraph*{Adaptive operators}
\label{sec:adaptive-operators}

It is worth noticing that, in many cases, the time to support the
operators on the compressed permutations is \emph{smaller} as the
permutation is more compressed, in opposition with the traditional
setting where one needs to decompress part or all of the data in
order to support the operators. 
This behavior, incidental in our study, is a very strong incentive to
further develop the study of difficulty or compressibility measures:
measures such that ``easy'' instances can both be compressed and
manipulated in better time capture the essence of the data.

\paragraph*{Compressed indices}
\label{sec:compressed-indices}

Interestingly enough, our encoding techniques for permutations
compress both the permutation and its index (i.e., the extra data to
speed up the operators).
This is opposed to previous
work~\cite{succinctRepresentationsOfPermutations} on the encoding of
permutations, whose index size varied with the size of the cycles of
the permutation, but whose data encoding was fixed; and to previous
work~\cite{succinctIndexesForStringsBinaryRelationsAndMultiLabeledTrees}
where the data itself can be compressed but not the index, to the
point where the space used by the index dominates that used by the
data itself.
This direction of research is promising, as in practice it is more
interesting to compress the whole succinct data structure or at least
its index, rather than just the data.

\paragraph*{Applications}
\label{sec:applications-1}

Permutations are
everywhere, so that compressing their representation helps compress
many other forms of data, and supporting in reasonable time the
operators on permutations yield support for other operators.

As a first example, consider a natural language text tokenized into word
identifiers.
Its {\em word-based inverted index} stores for each distinct word the list
of its occurrences in the tokenized text, in increasing order. This is a popular data
structure for text indexing \cite{BYRN11,WMB99}. By regarding the 
concatenation of the lists of occurrences of all the words, a permutation 
$\pi$ is obtained that is formed by $\nu$ contiguous ascending runs, where 
$\nu$ is the vocabulary size of the text. The lengths of those runs 
corresponds to the frequencies of the words in the text. Therefore our
representation achieves the zero-order word-based entropy of the text, which
in practice compresses the text to about 25\% of its original size
\cite{BCW90}. With $\pi(i)$ we can access any position of any inverted list,
and with $\pi^{-1}(j)$ we can find the word that is at any text position $j$.
Thus the representation contains the text and its inverted index within the
space of the compressed text.

A second example is given by compressed suffix arrays (CSAs), which are data 
structures for indexing general texts. A family of CSAs builds on a function
called $\Psi$ \cite{GV06,Sad03,GGV03}, which is actually a permutation. Much effort
was spent in compressing $\Psi$ to the zero- or higher-order entropy of the
text while supporting direct access to it. It turns out that $\Psi$ contains
$\sigma$ contiguous increasing runs, where $\sigma$ is the alphabet size of 
the text, and that the run lengths correspond to the symbol frequencies. Thus
our representation of $\Psi$ would reach the zero-order entropy of the text. It supports
not only access to $\Psi$ but also to its inverse $\Psi^{-1}$, which enables
so-called bidirectional indexes \cite{RNOM09}, which have several interesting
properties.  Furthermore, $\Psi$ contains
a number of strict ascending runs that depends on the high-order entropy of
the text, and this allows compressing it further \cite{NM07}.

From a practical point of view, our encoding schemes are simple enough
to be implemented. Some preliminary results on inverted indexes and
compressed suffix arrays show good performances on practical data sets.
As an external test, the techniques were successfully used to handle
scalability problems in MPI applications \cite{KMW10}.

\paragraph*{Followup}

Our preliminary results~\cite{BN09} have stimulated further research. This
is just a glimpse of the work that lies ahead on this topic.

While developing, with J. Fischer, compressed
indexes for Range Minimum Query indexes based on Left-to-Right Minima
(LRM) trees~\cite{Fis10,SN10}, we realized that LRM trees yield a
technique to rearrange in linear time $\nRuns$ contiguous ascending
runs of lengths forming vector $\vRuns$, into a partition of
$\nLRM=\nRuns$ ascending subsequences of lengths forming a new vector
$\vLRM$, of smaller entropy $\entropy(\vLRM)\leq\entropy(\vRuns)$
\cite{lrmTrees}.
Compared to a SUS partition, the LRM partition can have larger
entropy, but it is much cheaper to compute and encode. 
We represent it on
Figure~\ref{fig:partialOrder} between $\entropy(\vRuns)$ and
$\entropy(\vSUS)$.

While developing, with T. Gagie and Y.
Nekrich, an elegant combination of previously known compressed string
data structures to attain superior space/time trade-offs%
~\cite{alphabetPartitioning}, we realized that this
yields various compressed data structures for permutations $\pi$ such
that the times for $\pi()$ and $\pi^{-1}()$ are
improved to log-logarithmic. While those results subsume our
initial findings \cite{BN09}, the improved results now presented in
Theorem~\ref{thm:main2} are incomparable, and in particular superior
when the number of runs is polylogarithmic in $n$.

\section*{Acknowledgements}
  
  We thank Ian Munro, Ola Petersson and Alistair
  Moffat for interesting discussions.

\bibliographystyle{alpha}
  \bibliography{Biblio/general,Biblio/adaptiveAlgorithms,Biblio/hanoi,Biblio/perso,Biblio/aleh,Biblio/hci,Biblio/relationalDatabases,Biblio/biology,Biblio/informationRetrieval,Biblio/segmentIntersections,Biblio/cacheOblivious,Biblio/instanceOptimality,Biblio/spam,Biblio/chen,Biblio/interpolationSearch,Biblio/succinctDataStructures,Biblio/competitiveAnalysis,Biblio/intersection,Biblio/succinctPracticalEncodings,Biblio/computationalGeometry,Biblio/lca,Biblio/succinctTrees,Biblio/convexHull,Biblio/mechanismDesign,Biblio/unboundedSearch,Biblio/minimax,Biblio/unpublished,Biblio/google,Biblio/outputSensitiveAlgorithms,Biblio/xml,Biblio/gpgpu,Biblio/parameterizedComplexity,Biblio/gonzalo,paper}

\end{document}